\newtheorem{theorem}{Theorem}
\newtheorem{corollary}{Corollary}
\newtheorem{lemma}{Lemma}
\newtheorem{remark}{Remark}
\newtheorem{proposition}{Proposition}
\newtheorem{assumption}{Assumption}
\begin{document}


\title{Improving the Design of Linear Controllers for Homogeneous Platooning under Disturbances}

\author{Emerson A. da Silva$^{1}$, Leonardo A. Mozelli$^{2}$, Armando A. Neto$^{2}$, and Fernando O. Souza$^{2}$ 
\thanks{This study was partially supported by \ace{CAPES} - codes 001 and 88887.136349/2017-00, \ace{CNPq} - grants 465755/2014-3, 429819/2018-8 and 	312034/2020-2, \ace{FAPESP} - grant 2014/50851-0, and \ace{FAPEMIG} - grant APQ-00543-17.}
\thanks{E. A. da Silva is with the Graduate Prog. in Electrical Engineering, Univ. Fed. de Minas Gerais, Belo Horizonte, Brazil. {\tt\small eas2011@ufmg.br}}
\thanks{A. A. Neto, F. O. Souza and L. A. Mozelli are with the Dep. of Electronics Engineering, Univ. Fed. de Minas Gerais, Belo Horizonte, Brazil. {\tt\small \{aaneto,fosouza,mozelli\}@cpdee.ufmg.br}.}
%
}

\markboth{IEEE TRANSACTIONS ON INTELLIGENT TRANSPORTATION SYSTEMS, Submitted in 2023}%
{How to Use the IEEEtran \LaTeX \ Templates}

\maketitle



\begin{abstract}
    

    This paper addresses the problem of longitudinal platooning control of homogeneous vehicles subject to external disturbances, such as wind gusts, road slopes, and parametric uncertainties. Our control objective is to maintain the relative distance of the cars regarding their nearby teammates in a decentralized manner. Therefore, we proposed a novel control law to compute the acceleration commands of each vehicle that includes the integral of the spacing error, which endows the controller with the capability to mitigate external disturbances in steady-state conditions. We adopt a constant distance spacing policy and employ generalized look-ahead and bidirectional network topologies. We provide formal conditions for the controller synthesis that ensure the internal stability of the platoon under the proposed control law in the presence of constant and bounded disturbances affecting multiple vehicles. Experiments considering nonlinear vehicle models in the high-fidelity CARLA simulator environment under different disturbances, parametric uncertainties, and several network topologies demonstrate the effectiveness of our approach.
    
\end{abstract}

\begin{IEEEkeywords}
    Homogeneous vehicular platoons, Disturbance rejection, Network topologies.
\end{IEEEkeywords}

\section{Introduction}\label{sec:Introduction}
	
\IEEEPARstart{T}{he} growing demand for passenger and freight transportation has led to an increase in road usage and a decline in traffic flow efficiency. Vehicle platooning constitutes a possible solution to address these problems. A platoon is a group of connected and autonomous vehicles that move cooperatively with a short inter-vehicle distance and synchronized speed. They allow better usage of road space, reduced fuel consumption and environmental emissions, and enhanced driving safety.

These benefits rely heavily on the platoon's ability to keep a desired formation during travel. However, in practical driving conditions, internal and external disturbances can cause significant spacing errors that compromise the platoon formation and degrade its performance.

Precise models of engine, clutch, gearbox, wheel, tire, and braking systems are usually nonlinear, and their parameters are affected by vehicle aging and changes in the environment/roads \cite{Gao_2016_Article}. The latter can be described as external disturbances acting on each vehicle in the platoon, such as road slopes and wind gusts, as illustrated in Fig.~\ref{fig:problemdef}.

\begin{figure}[ht]
	\centering
	\includegraphics[width=\linewidth]{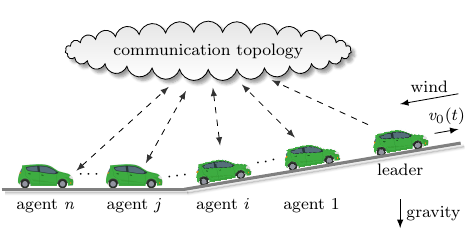}
	\caption{Homogeneous vehicular platoon subject to external disturbances.} 
	\label{fig:problemdef}
\end{figure}

In this work, we deal with the longitudinal control of homogeneous vehicular platoons under a \ac{CD} spacing policy, fixed network topology, free of communication delays. We propose an effective distributed control law that achieves null steady-state spacing error for general directed and undirected network topologies in a platoon subjected to road slopes, wind gusts, and parametric uncertainties modeled as bounded step disturbances in the vehicles in the platoon.


The remainder of this paper is organized as follows. Related works are presented in Sec. \ref{sec:RelatedWorks}. In Sec. \ref{sec:ProblemDefinition} we formulate the problem, present the longitudinal vehicle model, the network topology, the proposed distributed control law, and the platoon's formation error dynamics. The stability analysis of the platoon for directed and undirected topologies is given in Sec. \ref{sec:StabilityAnalysis}. Experiments in the realistic \emph{CARLA} driving environment demonstrating the effectiveness of our proposal are presented in Sec. \ref{sec:ResultsAnalysis}. Sec. \ref{sec:Conclusion} concludes the paper.

\section{Related Works}\label{sec:RelatedWorks}
	

\ac{GVM} and \ac{RG} represent the primary internal and external adversities for vehicle automation, as both exert a substantial influence on its dynamics.


Ensuring safe and dependable operation demands precise estimations of \ac{GVM} for various controllers. Its accurate knowledge enhances the performance of crucial systems such as anti-lock braking, collision avoidance, and stability control. Even in the case of small passenger cars, deviations from one trip to another can reach $50\%$, and for heavy-duty vehicles, the situation becomes even more critical. Besides safety, vehicle overload has societal impacts: increased wear and tear on roads and unfair competition among transport companies.


Terrain conditions play a pivotal role in shaping energy demand, with notable disparities in energy consumption observed between journeys across steep and flat terrain, adding up to a $20\%$ difference \cite{Boriboonsomsin_TRR_2009}. In the context of Eco-driving \ac{CC}, energy savings are attained through strategic maneuvers, including deceleration prior to descents and acceleration preceding ascents. Also, automated features, such as Hill Start Assist and transmission scheduling, are influenced by the surface gradient, having enduring impacts on both wear and operational costs.
 

To overcome these and other challenges, vehicle automation has become increasingly prevalent. Nevertheless, a recent study on commercially available \ac{ACC} unveiled potential drawbacks, including higher energy consumption and safety concerns if this technology becomes predominant within the current fleet \cite{Ciuffo_TRC_2021}. The study identified that minor speed fluctuations in the leading vehicle, caused by road geometry, could give rise to stop-and-go waves in the formation. Vehicular information and connectivity also present opportunities for addressing these issues, providing the next generation of maps that include \ac{RG} \cite{Zhang_TITS_2020}. However, limitations in the accuracy of current 3D maps and refresh rates of Global Navigation Satellite System (GNSS) sensors hinder the implementation of effective strategies to compensate for the influence of road slopes.

In the following, we outline some theoretical and practical works on vehicle platooning, focusing on the control strategies capable of handling disturbances, especially \ac{GVM} and \ac{RG}.


The work in \cite{Seiler_2004_Article} stands as a landmark that focuses on the theoretical study of the propagation of disturbances from the leader to the followers in the platoon, centered around the concept of string stability. It encompasses the \ac{PF}, \ac{PFL}, and \ac{BD} network topologies, with a CD spacing policy and a linear distributed controller for homogeneous double integrator vehicle dynamics. A more complete set of topologies is investigated in \cite{Zheng_2016_Article}. Besides many look-ahead topologies, there are investigations on bidirectional networks in which an agent exchanges information with preceding and succeeding neighboring vehicles. A third-order longitudinal model is obtained using exact feedback linearization. However, since the proposed controller primarily aims at stabilizing the formation, it lacks robustness to internal and external disturbances. Later, the authors improved the control design by considering the $H_\infty$ criteria in \cite{Zheng_2018_Article}.


Recently, \cite{Li_2022_Article} considered the same third-order model, but explicitly accounting for external disturbances, regarded as unknown signals with bounded amplitudes. However, the vehicles are assumed to be consistently traveling on flat roads. In \cite{Zhao_2022_Articlea}, Zhao et. al adopt the same vehicle model and assumptions about external disturbances. Nevertheless, both works consider another significant source of external disturbance in platooning: imperfect communication. To address the uncertainty of the \ac{PFL} topology, Li et al. \cite{Li_2022_Article} employed an adaptive controller. Meanwhile, \cite{Zhao_2022_Articlea} investigated Denial-of-Service (DoS) attacks in look-ahead topologies, modeled as temporary disruptions in the communication links, mitigated by linear distributed controller gains computed through Linear Matrix Inequalities (LMIs). The effects of time delay in the stability and steady-state conditions are examined in \cite{Souza_2020_Article} and \cite{Neto_2019_Article}, while \cite{Godinho_2022_Article} studies the merge and split phenomena. The presence of mixed traffic is addressed in \cite{Godinho_2023_Article}.


To compensate for disturbances, nonlinear control strategies have also been developed. In \cite{Kwon_2014_Article}, the focus is on the \ac{BD} topology. An adaptive \ac{SMC} strategy requiring measurements from position, velocity, and acceleration is devised. The controller is adaptable to uncertain and bounded parameters, encompassing mass, drag coefficient, and rolling friction. Additionally, adaptation extends to a lumped parameter representing the remaining uncertainties and disturbances. Recently, \cite{Zhu_TITS_2022} considered a broader set of topologies and improved the performance of the \ac{SMC} by incorporating a disturbance observer for the acceleration state. In general, \ac{SMC} is quite robust but prone to chattering phenomena and large control inputs in the presence of high model uncertainty.


Optimization-based techniques are another alternative for handling disturbances, offering the advantage of accommodating several states and input constraints in the design process. In contrast to previous approaches, a second-order model is considered in \cite{Hu_TITS_2022}, assuming instantaneous torque application and linearization around operating points. \ac{RG} and \ac{GVM} are treated as deterministic noises, whereas the air drag acts as a stochastic noise. Simulation results reveal that the proposed \ac{MPC} method effectively minimizes steady-state errors compared to traditional and robust \ac{MPC} versions. In \cite{Zhou_TITS_2022}, the model integrates terms related to actuation lag and attenuation while incorporating a distinct term to account for lumped uncertainties. A closed-loop min-max \ac{MPC} model is formulated, using a causal disturbance feedback mechanism that parameterizes the control input through a causal structure. However, the proposed centralized controller operating within the leading vehicle represents a disadvantage to previous decentralized strategies. Focusing on the \ac{PF} topology and the challenges posed by road geometry, \cite{Zhai_TVT_2018} presented a distributed \ac{MPC} based on a discretized model similar to the previous one. The study addresses the challenges of quickly obtaining online solutions with an emphasis on enhancing fuel economy.


A distinguishable paradigm involves leveraging explicit knowledge of disturbances. In \cite{Zhai_2022_Article}, information from GNSS and road maps is explored, allowing the segmentation of upcoming road into multiple sections based on road grade information. Each section has a constant slope value, irrespective of its length. Based on the premise that the dominant disturbance acting on the platoon comes from road geometry, a disturbance observer is employed to estimate the road gradient in \cite{Na_2019_Article}. Fusing this information with GNSS data, the study directly incorporates the estimated slope into the calculation of an optimal velocity profile, resulting in enhanced fuel savings.

Based on this review of recent works, it is noticeable that the topic of disturbance rejection from \ac{GVM} and \ac{RG} has deserved less attention from the Intelligent Transportation Systems (ITS) community when linear and distributed control strategies are employed, in contrast with their nonlinear and optimization-based counterparts. In the following, this gap is addressed and a solution to improve the longitudinal formation is proposed.

\section{Problem Statement}\label{sec:ProblemDefinition}

This section delineates the longitudinal control of a homogeneous platoon of vehicles. First, the longitudinal vehicle dynamic is described, followed by the network topology model. Then, a modified distributed control law is presented, with the formation error dynamics of the platoon.

\textbf{Notation}: Vectors and matrices are denoted by bold lowercase and uppercase letters, respectively. 
For $s \in \mathbb{C}$, denote $\operatorname{Re}\{s\}$ and $\operatorname{Im}\{s\}$ its real and imaginary parts. $\mathbf{I}_n$ is an $n \times n$ identity matrix. $\mathbf{1}_{n} \in \mathbb{R}^{n}$ denotes a vector of ones. $\mathbf{1}_{n}^{i} \in \mathbb{R}^{n}$ is a vector where the only non-null element equals one and is at the $i$-th row. $\Omega_{n,m}^{i,j} \in \mathbb{R}^{n \times m}$ is a single-entry matrix, where the only non-null element is equal to 1 and it is located at the $(i,j)$ entry. For $\mathbf{A} = [a_{ij}] \in \mathbb{R}^{n \times m}$ and $\mathbf{B} \in \mathbb{R}^{r \times l}$, $\mathbf{A} \otimes \mathbf{B} = [a_{ij} \mathbf{B}] \in \mathbb{R}^{r n \times m l}$ is the Kronecker product of $\mathbf{A}$ and $\mathbf{B}$. $\operatorname{det}(\mathbf{A})$ is the determinant of a square matrix $\mathbf{A}$. $|\cdot|$ is the absolute value. $\lambda_{i}(\mathbf{A})$ is the $i$-th eigenvalue of $\mathbf{A}$. $\mathcal{N} = \{1, 2, \dots, N \}$ denotes the set of followers in the platoon. $\underline{\lambda} = \underset{i \in \mathcal{N}}{\operatorname{min}} \, \lambda_{i}(\mathcal{\mathbf{A}}) $ and $ \overline{\lambda} = \underset{i \in \mathcal{N}}{\operatorname{max}} \, \lambda_{i}(\mathcal{\mathbf{A}}) $ denote the minimum and maximum eigenvalues of $\mathbf{A}$.



\subsection{Longitudinal Vehicle Dynamics}

	The vehicle dynamics considered here are based on \cite{Zheng_2016_Article,Zheng_2018_Article,Souza_2020_Article}. It assumes rigid and symmetrical vehicle bodies and negligible sliding of tires, pitch, and yaw moments. However, it considers the effects of wind speed, parametric uncertainty, and road inclination. Given this, the $i$-th vehicle nonlinear longitudinal dynamics is described by:
	\begin{align}
		      \dot{p}_{i}(t) \! &= \! v_{i}(t), \label{eq:LongDynamEq1} \\
		m a_{i}(t) \! &= \! {\frac{\eta }{r}} T_{i}(t) \! - \! \frac{1}{2} \rho c_{d} \bar{v}_{i}(t) | \bar{v}_{i}(t) | - m  g \sin{\theta(t)} \label{eq:LongDynamEq2} \\
		& - m g \mu \cos{\theta(t)} \bar{v}_{i}(t) / |\bar{v}_{i}(t)|, \nonumber
	\end{align}
	\noindent where $p_{i}(t)$, $v_{i}(t)$, and $a_{i}(t)$ are the position, velocity, and acceleration of the $i$-th vehicle, $m$ is the mass, $\eta$ is the drive-line mechanical efficiency, $r$ is the tire radius, $\rho$ is the air density, $c_{d}$ is the drag coefficient, $v_{w,i}(t)$ is the wind speed, $\bar{v}_{i}(t) \! = \! v_{i}(t) \! + \! v_{w,i}(t)$, $g$ is the acceleration due to gravity, $\mu$ is the rolling resistance coefficient, and $\theta(t)$ is the road inclination.
	
	The vehicle acceleration response, which relates the desired driving or braking torque $\widetilde{T}_{i}(t)$ with the actual torque $T_{i}(t)$, is modeled by the following first-order dynamics:
	\begin{equation}\label{eq:PowerTrainDynamics1stOrder}
		\varsigma \dot{T}_{i}(t) + T_{i}(t) = \widetilde{T}_{i}(t),
	\end{equation}
    \noindent where $\varsigma$ is the power-train time constant.
    
    A third-order linear model for the $i$-th vehicle is obtained by the feedback linearization technique, considering the following expression for the desired torque:
	\begin{align}
			\widetilde{T}_{i}(t) \! &= \! \frac{r}{\eta} \Bigl[ \frac{\rho c_{d}}{2} \Bigl( \!
		\bar{v}_{i}(t) | \bar{v}_{i}(t) | + \varsigma \dot{\bar{v}}_{i}(t) \bigl[ | \bar{v}_{i}(t) | + 2\bar{v}_{i}(t) / |\bar{v}_{i}(t)| \bigr] \! \Bigr) \nonumber \\ 
					   &  \qquad -\smash{m g \mu (\sin\theta(t)\dot{\theta}(t)\varsigma - \cos\theta(t) ) \bar{v}_{i}(t) / |\bar{v}_{i}(t)|} \nonumber \\ 
				       &  \qquad + m g (\cos\theta(t)\dot{\theta}(t)\varsigma + \sin\theta(t) ) + m u_{i}(t) \bigr], \label{eq:FeedbackLinearization}
	\end{align}
	\noindent where $u_{i}(t)$ is the desired input acceleration, and $\dot{\bar{v}}_{i}(t) \! = \! \dot{v}_{i}(t) \! + \! \dot{v}_{w,i}(t) $. The expression in \eqref{eq:FeedbackLinearization} is obtained by taking the derivative of \eqref{eq:LongDynamEq2} and substituting $\dot{T}_{i}(t)$, from \eqref{eq:PowerTrainDynamics1stOrder}, and $T_{i}(t)$ from \eqref{eq:LongDynamEq2}. Which simplifies to
	\begin{equation}\label{eq:AccLinearDynamics}
		\varsigma \dot{a}_{i}(t) = u_{i}(t) - a_{i}(t). \\
	\end{equation}
	Leading to the following state-space model:
    \begin{align}
        \dot{\mathbf{x}}_{i}(t) &= \mathbf{A} \mathbf{x}_{i}(t) + \mathbf{B} \left( u_{i}(t) + \psi_{i}(t) \right), \label{eq:platoonDynamicsTransition} 
    \end{align}
	\noindent where $\psi_{i}(t)$ represents an external disturbance, and $\mathbf{x}_{i}(t)$, $\mathbf{A}$, and $\mathbf{B}$ are given by:
    \begin{equation*}\label{eq:ModelMatrices}
        \arraycolsep=1.5pt\def\arraystretch{1}
        \mathbf{x}_{i}(t) \! = \!\! \left[\begin{array}{c}
            p_{i}(t) \\
            v_{i}(t) \\
            a_{i}(t)
            \end{array}\right]\!\!\!, \,\, \mathbf{A} \! = \!\! \left[\begin{array}{cccc}
            0 & 1 & 0 \\
            0 & 0 & 1 \\
            0 & 0 & \frac{-1}{\varsigma}
        \end{array}\right]\!\!\!, \,\, \mathbf{B} \! = \!\! \left[\begin{array}{c} 
            0 \\
            0 \\
            \frac{1}{\varsigma}
        \end{array}\right]. 
    \end{equation*}

\subsection{Network Topology}

    The platoon connection network is modeled as a directed graph $\mathcal{G}_{N+1} = \{ \mathcal{V}, \mathcal{E}, \mathcal{A} \}$, composed of a set of $N+1$ nodes $\mathcal{V} = \{\nu_{0}, \nu_{2}, \dots, \nu_{N}\}$, representing the vehicles, a set of edges $\mathcal{E} = \mathcal{V} \times \mathcal{V}$, representing a connection link. The adjacency matrix $\mathcal{A} = [a_{ij}] \in \mathbb{R}^{N \times N}$, indicates the connection link between the followers and is defined as
    \begin{equation}
    	a_{ij} = \begin{cases} 
    		1, & i \leftarrow j, \\ 
    		0, & i \nleftarrow j.
    	\end{cases} \quad
    	\arraycolsep=2pt\def\arraystretch{1}
    	\mathcal{A} = \! \begin{bmatrix}
    		0 & a_{1 2} & \cdots & a_{1 N} \\[0cm]
    		a_{2 1} &       0 & \cdots & a_{2 N} \\[-0.12cm]
    		\vdots &  \vdots & \ddots & \vdots  \\[-0.12cm]
    		a_{N 1} & a_{N 2} & \cdots & 0         
    	\end{bmatrix}, \label{eq:AdjacencyMatrix}
    \end{equation}
	\noindent where $i \leftarrow j$ means that node $i$ receives information from $j$. $a_{ij} = 0$ when that is not the case, and when $i=j$.
	
    Other characteristics of the graph are described by the Pinning, Degree, and Laplacian matrices. The Pinning matrix $\mathcal{P} = \operatorname{diag}\{p_{1}, p_{2}, \dots, p_{N}\} \in \mathbb{R}^{N \times N}$ indicates the connection link from the leader to the followers, defined as
        \begin{equation*}
        	p_{i} = \begin{cases} 
        		1, & i \leftarrow 0, \\ 
        		0, & i \nleftarrow 0.
        	\end{cases}
    \end{equation*}

    The Degree matrix $\mathcal{D} = \operatorname{diag}\{d_{1}, d_{2}, \dots, d_{N}\} \in \mathbb{R}^{N \times N}$, where $d_{i} = \sum_{k = 1}^{N} a_{ik}$, is defined as the in-degree of each node. The Laplacian matrix $\mathcal{L} = [l_{ij}] \in \mathbb{R}^{N \times N}$ is defined as $\mathcal{L} = \mathcal{D} - \mathcal{A}$, or equivalently as
    \begin{equation}
    	l_{ij} \!=\! \begin{cases} 
    		\sum\limits_{k = 1}^{N} a_{ik}, & \!\! i = j, \\
    		-a_{ij}, & \!\! i \neq j,
    	\end{cases} \quad
    	\arraycolsep=0.5pt\def\arraystretch{1}
            %
            %
    	\mathcal{L} \! = \! \begin{bmatrix}
    		   d_{1} & -a_{1 2} & \cdots & -a_{1 N} \\[0cm]
    		-a_{2 1} &    d_{2} & \cdots & -a_{2 N} \\[-0.12cm]
    		  \vdots &   \vdots & \ddots & \vdots   \\[-0.12cm]
    		-a_{N 1} & -a_{N 2} & \cdots & d_{N}         
    	\end{bmatrix}.
     \label{eq:LaplacianMatrix}
    \end{equation}

    The set of neighboring node vehicles connected to vehicle $i$ is denoted by $\mathbb{N}_{i} = \{ j \in \mathcal{N}| a_{ij} = 1 \}$. If the $i$-th vehicle is pinned to the leader, i.e., if it receives information from the leader, then $p_{i} = 1$, and $\mathbb{P}_{i} = \{ j = 0 | p_{i} = 1 \}$. Thus, $\mathbb{I}_{i} = \mathbb{N}_{i} \cup \mathbb{P}_{i}$ denotes the set of all neighboring vehicles connected to $i$.
    
    Some commonly used network topologies are shown in Figs.~\ref{fig:Topologies1} and~\ref{fig:Topologies2}. \ace{PF} and \ace{BD} topologies were often employed in the initial phases of platooning due to their dependency on radar-based communication \cite{Zheng_2016_Article}. In these configurations, a vehicle could only receive state information from its adjacent neighbors. 
    
    Advancements in wireless communication technology, particularly Dedicated Short Range Communication (DSRC), have facilitated the emergence of Vehicle-to-Vehicle (V2V) communication \cite{Gao_2018_Article}. This evolution has given rise to generalized topologies such as $r$PF(L) and $r$BD(L), where vehicles can actively exchange data with other vehicles in the platoon.
    
	\begin{figure}[htb]
		\centering
		\includegraphics[width=\linewidth]{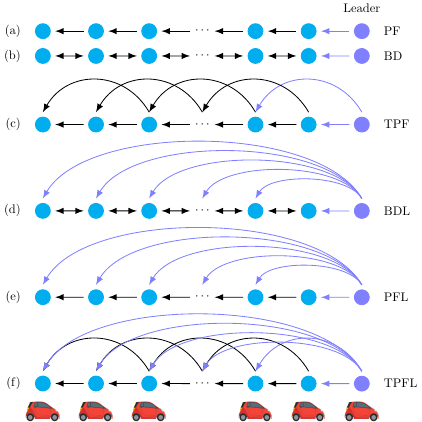}
		\caption{Network Topology: (a) \ace{PF}, (b) \ace{BD}, (c) Two PF (TPF), (d) BD $+$ Leader (BDL), (e) PF $+$ Leader (PFL), and (f) TPF $+$ Leader (TPFL).}
		\label{fig:Topologies1}
	\end{figure}

	\begin{figure}[htb]
		\centering
		\includegraphics[width=\linewidth]{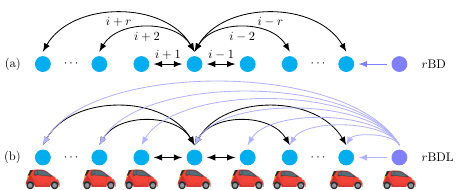}
		\caption{Generalized Undirected Network Topology: (a) $r$BD, and (b) $r$BD + Leader ($r$BDL).}
		\label{fig:Topologies2}
	\end{figure}
	
\subsection{Linear Distributed Control Law}
	
	The linear distributed control law in \eqref{eq:DistribControlLaw} was used to stabilize the platoon. Considering the relative state information of all the followers connected to the $i$-th  vehicle, and a \ace{CD} policy.
	
    \begin{equation}\label{eq:DistribControlLaw}
        \begin{aligned}
            u_i(t) &= -\sum_{j \in \mathbb{I}_i} \Biggl[ \kappa_{s} \int_{0}^{t} \left( p_{i}(t) - p_{j}(t) + d_{i j} \right) dt \Biggr. \\
            & \qquad\quad + \kappa_{p} \left( p_{i}(t) - p_{j}(t) + d_{i j} \right) \\
            & \qquad\quad + \smash{ \kappa_{v} \left( v_{i}(t) - v_{j}(t) \right) + \Biggl. \kappa_{a} \left( a_{i}(t) - a_{j}(t) \right) \biggr], }
        \end{aligned}
    \end{equation}
    
    \noindent where $\kappa_{\#}$ for $\# \in \{s, p, v, a\}$ are the controller gains and $d_{i j}$ is the front bumper to-rear-bumper constant desired distance.
    
    The control law in \eqref{eq:DistribControlLaw} is an improved version of the commonly used control protocols in the literature, with the addition of an integral term for the spacing error with gain $\kappa_{s}$. This simple modification leads the platoon to reach null steady-state spacing error by rejecting bounded constant disturbance inputs acting on each vehicle, such as road slope, wind speed, as well as parametric uncertainties in the longitudinal vehicle model.

\subsection{Closed-Loop Formation Error Dynamics}

	To analyze the internal stability and design the control gains, we extend the state vector $\mathbf{x}_{i}(t)$ to include a new state, $s_{i}(t) = \int_{0}^{t} p_{i}(t) dt $, into the vehicle model \eqref{eq:platoonDynamicsTransition}. Yielding the extended state-space model in \eqref{eq:ExtendedplatoonDynamicsTransition} for the $i$-th vehicle.
	\begin{align}
		\dot{\mathbf{x}}_{i}(t) &= \mathbf{A} \mathbf{x}_{i}(t) + \mathbf{B} u_i(t), \label{eq:ExtendedplatoonDynamicsTransition} 
	\end{align}
    \begin{equation*}\label{eq:ABxMatrices}
        \arraycolsep=0.5pt\def\arraystretch{1}
        \mathbf{x}_{i}(t) \! = \!\! \left[\begin{array}{c}
            s_{i}(t) \\
            p_{i}(t) \\
            v_{i}(t) \\
            a_{i}(t)
            \end{array}\right] \!\!\!, \,\, \arraycolsep=2pt\def\arraystretch{1} \mathbf{A} \! = \!\! \left[\begin{array}{cccc}
            0 & 1 & 0 & 0 \\
            0 & 0 & 1 & 0 \\
            0 & 0 & 0 & 1 \\
            0 & 0 & 0 & \frac{-1}{\varsigma}
	        \end{array}\right] \!\!\!, \,\, \arraycolsep=1pt\def\arraystretch{1} \mathbf{B} \! = \!\!\left[\begin{array}{c}
	        0 \\
	        0 \\
	        0 \\
	        \frac{1}{\varsigma} \end{array}\right] \!\!\!.
    \end{equation*}
	
	And the vector form of the control law \eqref{eq:DistribControlLaw} becomes
	\begin{equation}
		u_i(t) = -\mathbf{k}^{\top} \sum\nolimits_{j \in \mathbb{I}_i} \mathbf{x}_{i}(t) - \mathbf{x}_{j}(t) + \boldsymbol{\delta}_{ij}(t) = -\mathbf{k}^{\top} \boldsymbol{\epsilon}_{i}(t), \label{eq:DistribControlLawErrorVector}
	\end{equation}
	\noindent where $\mathbf{k} = \arraycolsep=2pt\def\arraystretch{1}\begin{bmatrix} \kappa_{s} & \kappa_{p} & \kappa_{v} & \kappa_{a} \end{bmatrix}^{\top}$ is the control gain vector, $\boldsymbol{\delta}_{ij}(t) = \arraycolsep=2pt\def\arraystretch{1}\begin{bmatrix} d_{i j} t & d_{i j} & 0 & 0 \end{bmatrix}^{\top}$ is the desired distance vector, and $\boldsymbol{\epsilon}_{i}(t)$ is the tracking error vector.
	
	Then, the augmented longitudinal dynamics of the platoon is formed by stacking each vehicle's state in augmented vectors, yielding the compact form in \eqref{eq:AugmentedplatoonDynamicsSS}.
    \begin{equation}\label{eq:AugmentedplatoonDynamicsSS}
        \dot{\mathbf{X}}(t) = \widehat{\mathbf{A}} \mathbf{X}(t) + \widehat{\mathbf{B}} \mathbf{U}(t), 
    \end{equation}
    \noindent where $\widehat{\mathbf{A}} = \mathbf{I}_{N} \otimes \mathbf{A}$ and $\widehat{\mathbf{B}} = \mathbf{I}_{N} \otimes \mathbf{B}$ are block diagonal matrices, and $\mathbf{X}(t) \!\! = \!\! \begin{bmatrix} \mathbf{x}_{1}(t) \!\! & \!\! \mathbf{x}_{2}(t) \!\! & \!\! \cdots \!\! & \!\! \mathbf{x}_{N}(t) \end{bmatrix}^{\top}$ and $\mathbf{U}(t) \!\! = \!\! \begin{bmatrix} u_{1}(t) \!\! & \!\! u_{2}(t) \!\! & \!\! \cdots \!\! & \!\! u_{N}(t) \end{bmatrix}^{\top}$ are the augmented state and input vectors.
	
    To compute the formation error dynamics, we first define a distance vector 
    $ \boldsymbol{\delta}_{0}^{\top} = 
    	\arraycolsep=2pt\def\arraystretch{1}
    	\left[\begin{array}{ccccc}
    		d_{0,1} & \cdots & d_{0,i} & \cdots & d_{0,N}
    	\end{array}\right] \in \mathbb{R}^{N \times 1} $, where $d_{0,i} = \sum_{j = 0}^{i-1} \left( d_{j,j+1} + l_{j} \right) - l_{0}$ is the front to rear bumper distance between the leader and the $i$-th follower, and $l_{j}$ is the length of the $j$-th vehicle.
    
    Following this, a target distance vector $\boldsymbol{\Delta}_{0}(t)$, representing the desired formation distance of each vehicle with respect to the leader, is computed as:
    \begin{equation*}
        \boldsymbol{\Delta}_{0}(t) = \boldsymbol{\delta}_{0} \otimes \left( t \mathbf{1}_{4}^{1} + \mathbf{1}_{4}^{2} \right) = \begin{bmatrix}
        						  	\boldsymbol{\delta}_{0,1} \!\! & \!\! \boldsymbol{\delta}_{0,2} \!\! & \!\! \cdots \!\! & \!\! \boldsymbol{\delta}_{0,N}
        						  \end{bmatrix}^{\top} \!\! \in \mathbb{R}^{4 N \times 1}, 
    \end{equation*}
	\noindent where $\arraycolsep=4pt\def\arraystretch{1}\boldsymbol{\delta}_{0,i} = \begin{bmatrix} d_{0,i} t & d_{0,i} & 0 & 0 \end{bmatrix} \in \mathbb{R}^{4 \times 1}$ .
	
	And by making $N$ copies of the leader's state vector $\mathbf{x}_{0}(t)$
    \begin{equation*}
    	\mathbf{X}_{0}(t) = \mathbf{1}_{N} \otimes \mathbf{x}_{0}(t) = \begin{bmatrix}
    		\mathbf{x}_{0}(t) \!\! & \!\! \mathbf{x}_{0}(t) \!\! & \!\! \cdots \!\! & \!\! \mathbf{x}_{0}(t)
    	\end{bmatrix}^{\top} \!\! \in \mathbb{R}^{4 N \times 1}, 
    \end{equation*}
    \noindent where $\arraycolsep=2.5pt\def\arraystretch{1}
    	\mathbf{x}_0(t) = \begin{bmatrix}
    		s_{0}(t) & p_{0}(t) & v_{0}(t) & a_{0}(t) \\
    	\end{bmatrix}^{\top} \in \mathbb{R}^{4 \times 1} $.
    
    Now, the formation error of the platoon can be defined as
    \begin{equation*}
    	\widetilde{\mathbf{X}}(t) \! = \! \mathbf{X}(t) - \mathbf{X}^{*}(t) \! = \! \begin{bmatrix}
    		\widetilde{\mathbf{x}}_{1}(t) \!\! & \!\! \widetilde{\mathbf{x}}_{2}(t) \!\! & \!\! \cdots \!\! & \!\! \widetilde{\mathbf{x}}_{N}(t)
    	\end{bmatrix}^{\top} \!\! \in \mathbb{R}^{4 N \times 1}, 
    \end{equation*}
    \noindent where $\mathbf{X}^{*}(t) = \mathbf{X}_{0}(t) - \boldsymbol{\Delta}_{0}(t)$ represents the desired reference trajectory, formed by the replicated leader's state $\mathbf{X}_{0}(t)$ and its relative distance from each follower, $\boldsymbol{\Delta}_{0}(t)$.
    
    Thus, for a virtual leader, the formation error dynamics is
    \begin{align}
    	\dot{\widetilde{\mathbf{X}}}(t) &= \mathbf{\dot{X}}(t) - \mathbf{\dot{X}}^{*}(t), \nonumber \\
    									&= \dot{\mathbf{X}}(t) - \dot{\mathbf{X}}_{0}(t) + \dot{\boldsymbol{\Delta}}_{0}(t), \nonumber \\
										&= \dot{\mathbf{X}}(t) - \widehat{\mathbf{A}}\mathbf{X}_{0}(t) + \dot{\boldsymbol{\Delta}}_{0}(t), \label{eq:FormationErrorDynPartI}
    \end{align}
    \noindent where $\dot{\mathbf{X}}_{0}(t) = \widehat{\mathbf{A}}\mathbf{X}_{0}(t) + \widehat{\mathbf{B}}\mathbf{U}_{0}(t) = \widehat{\mathbf{A}}\mathbf{X}_{0}(t)$, given the fact that $\epsilon_{0}(t) = 0$, the tracking error of the leader with respect to itself is null, then $\mathbf{U}_{0}(t) = 0$.
        
    By replacing $\mathbf{X}_{0}(t) = \mathbf{X}^{*}(t) + \boldsymbol{\Delta}_{0}(t)$ in \eqref{eq:FormationErrorDynPartI}, and noting that in \eqref{eq:DeltazeroDot}, $\mathbf{A}\boldsymbol{\delta}_{0,i}(t) = \dot{\boldsymbol{\delta}}_{0,i} \implies \widehat{\mathbf{A}} \boldsymbol{\Delta}_{0}(t) = \dot{\boldsymbol{\Delta}}_{0}(t)$, then
    \begin{align}
    	\dot{\widetilde{\mathbf{X}}}(t) &= \dot{\mathbf{X}}(t) - \widehat{\mathbf{A}}\mathbf{X}^{*}(t) - \widehat{\mathbf{A}} \boldsymbol{\Delta}_{0}(t) + \dot{\boldsymbol{\Delta}}_{0}(t), \label{eq:DeltazeroDot} \\
									    &= \dot{\mathbf{X}}(t) - \widehat{\mathbf{A}}\mathbf{X}^{*}(t). \label{eq:FormationErrorDynPartII}
	\end{align}
	
	Then, substituting \eqref{eq:AugmentedplatoonDynamicsSS} in \eqref{eq:FormationErrorDynPartII} and $ \mathbf{X}(t) \! - \! \mathbf{X}^{*}(t)$ with $\widetilde{\mathbf{X}}(t)$, we obtain an expression for the formation error dynamics of the entire platoon:
	\begin{align}
		\dot{\widetilde{\mathbf{X}}}(t) &= \widehat{\mathbf{A}}\mathbf{X}(t) + \widehat{\mathbf{B}}\mathbf{U}(t)  - \widehat{\mathbf{A}}\mathbf{X}^{*}(t), \label{eq:FormationErrorDynPartIII} \\ 
     &= \widehat{\mathbf{A}} \bigl(  \mathbf{X}(t) - \mathbf{X}^{*}(t) \bigr) + \widehat{\mathbf{B}}\mathbf{U}(t), \nonumber \\
  &= \widehat{\mathbf{A}}\widetilde{\mathbf{X}}(t)+\widehat{\mathbf{B}}\mathbf{U}(t).\label{eq:FormationErrorDynPartFinal}
    \end{align}
	
	To express the augmented input vector $\mathbf{U}(t)$ in terms of the formation error $\mathbf{\widetilde{X}}(t)$ in a compact form, we expand $\boldsymbol{\epsilon}_{i}(t)$ in \eqref{eq:DistribControlLawErrorVector} for each node vehicle $j$ connected to $i$. By using the network topology information given by $\mathbb{I}_{i} = \mathbb{N}_{i} \cup \mathbb{P}_{i} = \{ j \in \mathcal{N}| a_{ij} = 1 \} \cup \{ j = 0 | p_{i} = 1 \}$.
	\begin{align}
		\boldsymbol{\epsilon}_{i}(t) \! &= \! \sum\nolimits_{j \in \mathbb{N}_{i} \cup \mathbb{P}_{i}} \mathbf{x}_{i}(t) - \mathbf{x}_{j}(t) + \boldsymbol{\delta}_{i j}, \\
						  &= \! \sum\nolimits_{j=1}^{N} \!\!\! a_{ij} \! \left( \mathbf{x}_{i}(t) \! - \! \mathbf{x}_j(t) \! + \! \boldsymbol{\delta}_{i j} \! \right) \! + \! p_{i} \! \left( \mathbf{x}_{i}(t) \! - \! \mathbf{x}_{0}(t) \! + \! \boldsymbol{\delta}_{i 0} \right), \nonumber \\
						  &= \! \sum\nolimits_{j=1}^{N} a_{ij} \left( \mathbf{\widetilde{x}}_{i}(t) - \mathbf{\widetilde{x}}_j(t) \right) + p_{i} \mathbf{\widetilde{x}}_{i}(t), \label{eq:CompactUExpansionPartI} \\
						  &= \! \Bigl( p_{i} + \sum\nolimits_{j=1}^{N} a_{i j} \Bigr) \mathbf{\tilde{x}}_{i}(t) - \sum\nolimits_{j = 1, j \neq i}^{N} a_{i j} \mathbf{\tilde{x}}_{j}(t). \label{eq:CompactUExpansionPartII}
	\end{align}

	To obtain \eqref{eq:CompactUExpansionPartI} we used the fact that $\boldsymbol{\delta}_{i j}(t) = \boldsymbol{\delta}_{i 0}(t) - \boldsymbol{\delta}_{0 j}(t)$, then by adding and subtracting $\mathbf{x}_{0}(t)$ in the first term, and using $\mathbf{x}^{*}_{i}(t) \! = \! \mathbf{x}_{0}(t) \! - \! \boldsymbol{\delta}_{i 0}(t)$ to substitute $\mathbf{x}_{i}(t) \! - \! \mathbf{x}^{*}_{i}(t) $ by $ \mathbf{\widetilde{x}}_{i}(t)$. 
	
	Stacking \eqref{eq:CompactUExpansionPartII} $\forall i \in \mathcal{N}$ and isolating $\mathbf{\widetilde{X}}(t)$, we obtain the compact form of the control law 
 
	\begin{align}
		\mathbf{U}(t) \! &= \! -\mathbf{k}^{\top} \Bigl[  \sum_{i = 1}^{N} \Omega_{N}^{i,i} \left( p_{i} + d_{i} \right) \mathbf{\tilde{x}}_{i}(t) \! - \! \sum_{i = 1}^{N} \sum_{j=1}^{N} \Omega_{N}^{i,j} a_{i j} \mathbf{\tilde{x}}_{j}(t) \Bigr], \nonumber \\ 
					  &= - \left\{ \!\! \left[ \sum_{i = 1}^{N} \Omega_{N}^{i,i} \! \left( \! \left(p_{i} + d_{i}\right) \! - \! \sum_{j=1}^{N} \Omega_{N}^{i,j} a_{i j} \right) \! \right] \!\! \otimes \mathbf{k}^{\top} \! \right\} \! \mathbf{\widetilde{X}}(t), \nonumber \\
					  &= - \left[ \left( \mathcal{L} + \mathcal{P} \right) \otimes \mathbf{k}^{\top} \right] \mathbf{\widetilde{X}}(t). \label{eq:CompactFormationErrorFinal}
	\end{align}

	Finally, the closed-loop form of the platoon formation error in \eqref{eq:ClosedLoopFormationErrorDynamics} is obtained by substituting \eqref{eq:CompactFormationErrorFinal} into \eqref{eq:FormationErrorDynPartFinal}:
	\begin{equation}
		\dot{\widetilde{\mathbf{X}}}(t) = \bigl\{ \widehat{\mathbf{A}} - \left( \mathcal{L} + \mathcal{P} \right) \otimes \mathbf{B} \mathbf{k}^{\top} \bigr\} \widetilde{\mathbf{X}}(t) = \widetilde{\mathbf{A}}_{c} \widetilde{\mathbf{X}}(t), \label{eq:ClosedLoopFormationErrorDynamics}
	\end{equation}
	\noindent where $\widetilde{\mathbf{A}}_{c} = \widehat{\mathbf{A}} - \left( \mathcal{L} + \mathcal{P} \right) \otimes \mathbf{B} \mathbf{k}^{\top}$.

\section{Platoon Internal Stability Analysis}\label{sec:StabilityAnalysis}
	
This section provides the stability analysis of the formation error dynamics of the platoon. First, stability conditions are derived for topologies of look-ahead type, also referenced as $r$PFL. Then, the analysis is extended to general undirected topologies, generically referenced as $r$BDL. Before beginning, we state the following assumptions:
\begin{assumption}[Assumption 3.1 \cite{Lewis_2013_Book}]\label{assump:SpanningTree}
	The directed or undirected augmented graph $\mathcal{G}_{N+1}$ contains a spanning tree rooted at the leader node.
\end{assumption}

\begin{assumption}\label{assump:Controlability}
	The pair $(\mathbf{A},\mathbf{B})$ is controllable.
\end{assumption}


%


\subsection{\textbf{Stability Analysis for Look-Ahead Topologies}}

\begin{theorem}[Asymptotic Stability for $r$PFL Topologies]\label{thm:LookAheadStability}
	 Consider a homogeneous platoon with dynamics \eqref{eq:platoonDynamicsTransition}, regulated by the control law \eqref{eq:DistribControlLaw}, and connected through a known and fixed $r$PFL topology. The following statements hold:
	 \begin{enumerate}
	 	\item Every follower is connected to one or more node vehicles: $n_{i} = p_{i} + d_{i} > 0, \quad n_{i} \in \mathbb{Z}^{+}, \quad \forall i \in \mathcal{N}$
	 	\item The formation error dynamics in \eqref{eq:ClosedLoopFormationErrorDynamics} is asymptotically stable if and only if
	 		\begin{align}
	 			& \kappa_{s} > 0, \quad 0 < \kappa_{p} < \kappa_{v} \bigl( 1 + \overline{n} \kappa_{a} \bigr)/\varsigma, \quad \kappa_{a} > -1/\overline{n}, \nonumber \\
	 			& \kappa_{v} > \bigl( \kappa_{s} \bigl( 1 + \overline{n} \kappa_{a} \bigr)^2 + \varsigma \overline{n} (\kappa_{p})^2 \bigr) / \underline{n} \bigl( 1 + \underline{n} \kappa_{a} \bigr) \kappa_{p} \label{eq:ControlGainsLookAhead}
	 		\end{align} 		
	 \end{enumerate}
	\noindent where $\underline{n} = \underset{i \in \mathcal{N}}{\operatorname{min}} \, n_{i}(\mathcal{P}+\mathcal{D}) $ and $ \overline{n} = \underset{i \in \mathcal{N}}{\operatorname{max}} \, n_{i}(\mathcal{P}+\mathcal{D}) $.
 
\end{theorem}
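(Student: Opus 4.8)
The plan is to exploit a structural feature that is special to look-ahead ($r$PFL) topologies. Because information flows only from a vehicle to its followers, the adjacency matrix $\mathcal{A}$ is strictly lower triangular, so $\mathcal{L}=\mathcal{D}-\mathcal{A}$, and therefore $\mathcal{L}+\mathcal{P}$, are lower triangular with diagonal entries $n_i=p_i+d_i$. Consequently the closed-loop matrix $\widetilde{\mathbf{A}}_c=\widehat{\mathbf{A}}-(\mathcal{L}+\mathcal{P})\otimes\mathbf{B}\mathbf{k}^{\top}$ of \eqref{eq:ClosedLoopFormationErrorDynamics} is block lower triangular, with the $N$ diagonal blocks $\mathbf{A}-n_i\mathbf{B}\mathbf{k}^{\top}$, $i\in\mathcal{N}$. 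Since the eigenvalues of a block triangular matrix are the union of those of its diagonal blocks, \eqref{eq:ClosedLoopFormationErrorDynamics} is asymptotically stable if and only if every $\mathbf{A}-n_i\mathbf{B}\mathbf{k}^{\top}$ is Hurwitz; the rest of the argument is then purely about these $4\times4$ blocks.

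For statement~1 I would observe that, because every edge points ``forward'', Assumption~\ref{assump:SpanningTree} (a spanning tree rooted at the leader) forces each follower $i$ to receive at least one link from $\{0,1,\dots,i-1\}$, so $n_i=p_i+d_i\ge1$; being a sum of $0/1$ entries, $n_i\in\mathbb{Z}^{+}$. For the equivalence, a direct computation shows $\mathbf{A}-n\mathbf{B}\mathbf{k}^{\top}$ is in companion form with characteristic polynomial
\begin{equation*}
s^4+\frac{1+n\kappa_a}{\varsigma}\,s^3+\frac{n\kappa_v}{\varsigma}\,s^2+\frac{n\kappa_p}{\varsigma}\,s+\frac{n\kappa_s}{\varsigma},
\end{equation*}
Assumption~\ref{assump:Controlability} being what guarantees the block has full order four so that all four coefficients are actually assignable through $\mathbf{k}$. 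Applying the Routh--Hurwitz test to this quartic with $n=n_i$: positivity of the coefficients gives $\kappa_s>0$, $\kappa_p>0$ and $1+n_i\kappa_a>0$; the first stability determinant gives $\kappa_p<\kappa_v(1+n_i\kappa_a)/\varsigma$; and the last gives $\kappa_v>\bigl(\kappa_s(1+n_i\kappa_a)^2+\varsigma n_i\kappa_p^2\bigr)/\bigl(n_i\kappa_p(1+n_i\kappa_a)\bigr)$, which one checks already implies the previous inequality once $\kappa_s>0$.

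It then remains to intersect these per-block conditions over $i\in\mathcal{N}$, that is, over the finitely many values $n_i\in\{\underline{n},\dots,\overline{n}\}$. The constraint $\kappa_a>-1/n_i$ is tightest at $n_i=\overline{n}$, giving $\kappa_a>-1/\overline{n}$; for the remaining two I would write the right-hand side of the $\kappa_v$-bound as $\tfrac{\kappa_s}{\kappa_p}\bigl(\tfrac1n+\kappa_a\bigr)+\tfrac{\varsigma\kappa_p}{1+n\kappa_a}$ and analyse its monotonicity in $n$ to replace it by a bound depending only on $\underline{n}$ and $\overline{n}$, namely the expression in \eqref{eq:ControlGainsLookAhead}, with $0<\kappa_p<\kappa_v(1+\overline{n}\kappa_a)/\varsigma$ recovered along the way. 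I expect this last bookkeeping to be the main obstacle: the $\kappa_v$-inequality is nonlinear in $n$ and not monotone when $\kappa_a<0$, so some care is needed to obtain a bound in terms of $\underline{n}$ and $\overline{n}$ alone while preserving the ``only if'' direction of the claimed equivalence.
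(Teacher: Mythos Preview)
Your approach is essentially identical to the paper's: it too uses the lower-triangular structure of $\mathcal{L}+\mathcal{P}$ for look-ahead topologies to factor the characteristic polynomial into the $N$ quartics $\phi_i(s)$ you wrote, applies Routh--Hurwitz to each, and then replaces $n_i$ by $\underline{n}$ or $\overline{n}$. Your final caveat about monotonicity and the ``only if'' direction is well taken---the paper simply asserts the passage from the per-$i$ inequalities to \eqref{eq:ControlGainsLookAhead} without the monotonicity bookkeeping you outline.
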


\begin{proof}
	
	The formation error dynamics is asymptotically stable if and only if \eqref{eq:ClosedLoopTransitionMatrix} is Hurwitz.
	\begin{equation}
		\widetilde{\mathbf{A}}_{c} = \mathbf{I}_{N} \otimes \mathbf{A} - \left( \mathcal{L} + \mathcal{P} \right) \otimes \mathbf{B} \mathbf{k}^{\top} \label{eq:ClosedLoopTransitionMatrix}
	\end{equation}

	Requiring that all the roots of the characteristic equation in \eqref{eq:CharEqLookAhead} have negative real parts, which implies that
	\begin{equation}
		\Phi(s) = \operatorname{det} \! \left( s \mathbf{I}_{4N} - \widetilde{\mathbf{A}}_{c} \right) \label{eq:CharEqLookAhead}
	\end{equation}
	
	$\widehat{\mathbf{A}}$ is a block diagonal matrix by definition. For $r$PFL topologies, $\mathcal{L} + \mathcal{P}$ is strictly lower triangular. In this case, its eigenvalues are $n_{i} = p_{i} + d_{i} \in \mathbb{Z}^{+} $.
	
	Assumption \ref{assump:SpanningTree} guarantees that $n_{i} > 0, \forall i \in \mathcal{N}$. Making the leader's state information available directly or indirectly to all the followers, proving statement 1.

	To prove statement 2, we noticed that according to statement 1, \eqref{eq:CharEqLookAhead} is equivalent to \eqref{eq:EquivalentCharEqLookAheadPartI} \cite{Godinho_2022_Article}, with the $i$-th characteristic equation in \eqref{eq:EquivalentCharEqLookAheadPartII}.
	\begin{equation}
		\Phi(s) = \prod_{i = 1}^{N} \operatorname{det} \bigl( s \mathbf{I}_{4} - \mathbf{A} +  n_{i} \mathbf{B} \mathbf{k}^{\top} \bigr) = \prod_{i = 1}^{N} \phi_{i}(s) \label{eq:EquivalentCharEqLookAheadPartI}
	\end{equation}
	\begin{equation}\label{eq:EquivalentCharEqLookAheadPartII}
		\phi_{i}(s) \! = \! s^{4} \! + \! \left( \! \frac{1 + n_{i} \kappa_{a}}{\varsigma} \! \right) \!\! s^{3} + \frac{n_{i} \kappa_{v}}{\varsigma} s^{2} + \frac{n_{i} \kappa_{p}}{\varsigma} s + \frac{n_{i} \kappa_{s}}{\varsigma}
	\end{equation}

	Then, the Routh-Hurwitz stability criteria yields
	\begin{equation*}\label{eq:RouthHurwitzCriteriaLookAhead}
		\begin{NiceArray}{l|ccc}
			s^{4} & 1                    & \frac{n_{i} \kappa_{v}}{\varsigma} & \frac{n_{i} \kappa_{s}}{\varsigma} \\
			s^{3} & \frac{1 + n_{i} \kappa_{a}}{\varsigma} & \frac{n_{i} \kappa_{p}}{\varsigma} \\
			s^{2} & \alpha_{i}           & \frac{n_{i} \kappa_{s}}{\varsigma} \\
			s^{1} & \beta_{i}            &                    \\
			s^{0} & \frac{n_{i} \kappa_{s}}{\varsigma}                & \\
			\end{NiceArray} \quad \begin{aligned}
			\alpha_{i} \! &= \! \frac{\bigl( 1 \! + \! n_{i} \kappa_{a} \bigr) \bigl( n_{i} \kappa_{v} \bigr) \!\! - \! n_{i} \kappa_{p}}{ \varsigma \bigl( 1 \! + \! n_{i} \kappa_{a} \bigr)} \\
			\beta_{i} \! &= \! \frac{n_{i} \kappa_{p}}{\varsigma} - \frac{n_{i} \kappa_{s} \bigl(1 \! + \! n_{i} \kappa_{a} \bigr) }{\alpha_{i} \varsigma^{2} }
		\end{aligned}
	\end{equation*}	
	\begin{align*}
		& \kappa_{s} > 0, \quad 0 < \kappa_{p} < \kappa_{v} \bigl( 1 + n_{i} \kappa_{a} \bigr) / \varsigma, \quad \kappa_{a} > -1/n_{i}, \nonumber \\
		& \kappa_{v} > \bigl( \kappa_{s} \bigl( 1 + n_{i} \kappa_{a} \bigr)^2 + \varsigma n_{i} (\kappa_{p})^2 \bigr) / n_{i} \bigl( 1 + n_{i} \kappa_{a} \bigr) \kappa_{p}
	\end{align*}
	We know that $\varsigma > 0$ and $n_{i}> 0$, $\forall i \in \mathcal{N}$. The gains $\kappa_{s}$ and $\kappa_{a}$ are obtained directly from the table by inspection. Whereas $\kappa_{v}$ and $\kappa_{p}$ are computed by imposing $\alpha_{i}, \beta_{i} > 0$, and solving for the respective gains. Therefore, $\mathbf{A} - n_{i} \mathbf{B} \mathbf{k}^{\top} $ is asymptotically stable if and only if \eqref{eq:ControlGainsLookAhead} is satisfied.
\end{proof}

\begin{corollary}[Asymptotic Stability for $\kappa_{s} = 0$ and $r$PFL Topologies]\label{cor:LookAheadKappaSZero}

	It follows from Theorem \ref{thm:LookAheadStability} that if $\kappa_{s} = 0$, then the original system in \eqref{eq:platoonDynamicsTransition}, without the integrator state, is asymptotically stable if and only if \eqref{eq:CorollarykKappaSZeroGains} is satisfied.
	\begin{equation}
		\kappa_{p} > 0, \quad \kappa_{v} > \varsigma \kappa_{p} / \bigl( 1 + \underline{n} \kappa_{a} \bigr), \quad \kappa_{a} > -1/\overline{n} \label{eq:CorollarykKappaSZeroGains}
	\end{equation}
\end{corollary}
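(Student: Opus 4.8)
The plan is to re-run the argument of Theorem~\ref{thm:LookAheadStability} one order lower, since with $\kappa_s=0$ the control law \eqref{eq:DistribControlLaw} no longer depends on the integral state $s_i$ and the problem collapses onto the third-order model \eqref{eq:platoonDynamicsTransition} with reduced gain vector $\mathbf{k}=\begin{bmatrix}\kappa_p & \kappa_v & \kappa_a\end{bmatrix}^{\top}$. First I would form the closed-loop error matrix exactly as in \eqref{eq:ClosedLoopTransitionMatrix}, namely $\widetilde{\mathbf{A}}_c=\mathbf{I}_N\otimes\mathbf{A}-(\mathcal{L}+\mathcal{P})\otimes\mathbf{B}\mathbf{k}^{\top}$ with the $3\times3$ matrices $\mathbf{A},\mathbf{B}$ of \eqref{eq:platoonDynamicsTransition}; the block structure is unchanged. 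For an $r$PFL topology $\mathcal{L}+\mathcal{P}$ is strictly lower triangular with diagonal entries $n_i=p_i+d_i$, so by the same Kronecker/triangular factorization used to pass from \eqref{eq:CharEqLookAhead} to \eqref{eq:EquivalentCharEqLookAheadPartI} the characteristic polynomial splits as $\Phi(s)=\prod_{i=1}^{N}\det\!\big(s\mathbf{I}_3-\mathbf{A}+n_i\mathbf{B}\mathbf{k}^{\top}\big)=\prod_{i=1}^{N}\psi_i(s)$, where $\psi_i(s)=s^{3}+\tfrac{1+n_i\kappa_a}{\varsigma}s^{2}+\tfrac{n_i\kappa_v}{\varsigma}s+\tfrac{n_i\kappa_p}{\varsigma}$. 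Equivalently, simply setting $\kappa_s=0$ in \eqref{eq:EquivalentCharEqLookAheadPartII} gives $\phi_i(s)=s\,\psi_i(s)$: the spurious root at the origin is precisely the decoupled integrator $\dot s_i=p_i$, i.e. exactly the state removed when returning to \eqref{eq:platoonDynamicsTransition}, so that $\widetilde{\mathbf{A}}_c$ of the third-order platoon is Hurwitz iff every $\psi_i$ is Hurwitz.

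Next I would apply the Routh--Hurwitz test to the monic cubic: $s^{3}+a_2 s^{2}+a_1 s+a_0$ is Hurwitz iff $a_0>0$, $a_2>0$, and $a_1 a_2>a_0$. Substituting $a_2=(1+n_i\kappa_a)/\varsigma$, $a_1=n_i\kappa_v/\varsigma$, $a_0=n_i\kappa_p/\varsigma$ and using $\varsigma>0$, $n_i\in\mathbb{Z}^{+}$ (statement~1 of Theorem~\ref{thm:LookAheadStability}) yields the per-mode conditions $\kappa_p>0$, $\kappa_a>-1/n_i$, and $\kappa_v>\varsigma\kappa_p/(1+n_i\kappa_a)$, the last of which also forces $\kappa_v>0$. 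Since stability of the platoon is equivalent to all $N$ factors being Hurwitz, the final step is to intersect these conditions over $i\in\mathcal{N}$: the bound $\kappa_a>-1/n_i$ is most restrictive for the largest mode, giving $\kappa_a>-1/\overline{n}$; and, with $1+n_i\kappa_a>0$ already secured, the bound on $\kappa_v$ is controlled by $\min_i(1+n_i\kappa_a)$, which is attained at the appropriate extreme mode and reduces to $\kappa_v>\varsigma\kappa_p/(1+\underline{n}\kappa_a)$. Collecting these gives \eqref{eq:CorollarykKappaSZeroGains}, and because every step is an equivalence the conditions are both necessary and sufficient.

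The Routh computation itself is routine — it is literally the $s^{4}$ table in the proof of Theorem~\ref{thm:LookAheadStability} with the $s^{0}$ row deleted — so the only delicate point is the worst-case bookkeeping across the $N$ modes: one must justify cleanly why $\overline{n}$ governs the $\kappa_a$ inequality while $\underline{n}$ governs the $\kappa_v$ inequality. This hinges on the monotonicity of $n_i\mapsto 1+n_i\kappa_a$ together with the sign information $1+n_i\kappa_a>0$ for all $i$, and on checking that replacing each per-$i$ Routh condition by its tightest instance does not weaken the claimed "if and only if". That consolidation step is where I expect the main (mild) obstacle to lie; everything else follows the template of Theorem~\ref{thm:LookAheadStability} verbatim.
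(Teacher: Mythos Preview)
Your proposal is correct and follows exactly the same route as the paper: set $\kappa_s=0$ in \eqref{eq:EquivalentCharEqLookAheadPartII}, divide out the trivial factor $s$ to obtain the cubic $\bar{\phi}_i(s)$, and apply the Routh--Hurwitz criterion to each factor. The paper's own proof is in fact terser than yours---it simply records the reduced polynomial and the Routh table---so your discussion of the per-mode consolidation (why $\overline{n}$ controls the $\kappa_a$ bound and $\underline{n}$ the $\kappa_v$ bound) is more explicit than what the paper provides.
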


\begin{proof}
	
	The proof is straightforward. Given the modified characteristic polynomial in \eqref{eq:EquivalentCharEqLookAheadKappaSZero} and the Routh-Hurwitz stability analysis in \eqref{eq:RouthHurwitzCriteriaLookAheadKappaSZero}.
	\begin{equation}\label{eq:EquivalentCharEqLookAheadKappaSZero}
		\bar{\phi}_{i}(s) = \frac{\phi_{i}}{s} \biggr\rvert_{\kappa_{s}=0} \!\!\! = \! s^{3} \! + \! \left( \! \frac{1 + n_{i} \kappa_{a}}{\varsigma} \! \right) \!\! s^{2} + \frac{n_{i} \kappa_{v}}{\varsigma} s + \frac{n_{i} \kappa_{p}}{\varsigma}
	\end{equation}
	\begin{equation}\label{eq:RouthHurwitzCriteriaLookAheadKappaSZero}
		\begin{NiceArray}{l|cc}
			s^{3} & 1 & \frac{n_{i} \kappa_{v}}{\varsigma}  \\
			s^{2} & \frac{1 + n_{i} \kappa_{a}}{\varsigma} & \frac{n_{i} \kappa_{p}}{\varsigma} \\
			s^{1} & \alpha_{i} \\
			s^{0} & \frac{n_{i} \kappa_{s}}{\varsigma} \\
		\end{NiceArray}
	\end{equation}
\end{proof}	
	
 Although stability is preserved, null steady-state spacing error is no longer guaranteed. This result is in line with the stabilizing thresholds found in \cite{Godinho_2022_Article, Neto_2019_Article, Godinho_2023_Article}. Also, Cor.~\ref{cor:LookAheadKappaSZero} reveals that \cite{Zheng_2021_Article} becomes a particular case of Thm.~\ref{thm:LookAheadStability} when $\kappa_{s} = 0$.

\subsection{\textbf{Stability Analysis for Undirected Topologies}}

The stability analysis for the undirected case requires the following Proposition and Lemmas.

\begin{lemma}[Theorem 6.1.1: Geršgorin Disk Criteria \cite{Horn_2012_Book}]\label{lem:GershogorinDiskCriteria}
	Let a square matrix $\mathcal{M} = [m_{ij}] \in \mathbb{R}^{n \times n}$, all its eigenvalues lie in the union of $n$ disks
	\begin{equation}
		S(\mathcal{M}) \subset \bigcup_{i=1}^{n} \left\{ \lambda \in \mathbb{C} \,\, \bigl. \Bigr| \,\, | \lambda - m_{ii} | \leqslant \!\!\!\! \sum_{j=1,j \neq i}^{n} \!\!\! \left| m_{ij} \right| \right\}. 
	\end{equation}
\end{lemma}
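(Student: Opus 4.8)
The plan is to localize each eigenvalue by examining the dominant entry of an associated eigenvector, which is the classical argument behind Ger\v{s}gorin's theorem.

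First I would fix an arbitrary eigenvalue $\lambda$ of $\mathcal{M}$ together with a nonzero eigenvector $\mathbf{x} = \begin{bmatrix} x_1 & \cdots & x_n \end{bmatrix}^{\top} \in \mathbb{C}^{n}$, so that $\mathcal{M}\mathbf{x} = \lambda \mathbf{x}$. Since $\mathbf{x} \neq \mathbf{0}$, I would choose an index $i$ attaining the maximum modulus among the components, i.e.\ $|x_i| = \max_{1 \le k \le n} |x_k| > 0$.

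Next I would extract the $i$-th scalar equation of $\mathcal{M}\mathbf{x} = \lambda \mathbf{x}$, namely $\sum_{j=1}^{n} m_{ij} x_j = \lambda x_i$, and separate the diagonal term to obtain $(\lambda - m_{ii}) x_i = \sum_{j \neq i} m_{ij} x_j$. Taking moduli, applying the triangle inequality, and using $|x_j| \le |x_i|$ for every $j$ yields $|\lambda - m_{ii}|\,|x_i| \le \sum_{j \neq i} |m_{ij}|\,|x_j| \le \bigl( \sum_{j \neq i} |m_{ij}| \bigr) |x_i|$; dividing by $|x_i| > 0$ gives $|\lambda - m_{ii}| \le \sum_{j \neq i} |m_{ij}|$. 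Hence $\lambda$ lies in the $i$-th disk, and since $\lambda$ was an arbitrary eigenvalue, the spectrum $S(\mathcal{M})$ is contained in the union of the $n$ disks.

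There is no genuine obstacle in this argument; the only subtlety worth flagging is that the dominant index $i$ depends on the chosen eigenvector, and hence on $\lambda$, so different eigenvalues may be certified by different disks — which is precisely why the conclusion asserts membership in the \emph{union} rather than in any single disk. Since the statement is standard and attributed to \cite{Horn_2012_Book}, one could simply cite it, but the few lines above keep the development self-contained.
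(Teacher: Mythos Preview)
Your argument is correct and is the standard textbook proof of Ger\v{s}gorin's theorem. The paper does not actually prove this lemma: it is stated with a citation to \cite{Horn_2012_Book} and then invoked as a tool in the proof of Theorem~\ref{thm:BidirectionalStability}, so there is no paper-side proof to compare against. Your self-contained derivation is exactly what one finds in the cited reference, and it would serve well if one wished to make the exposition independent of that citation.
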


\begin{proposition}[Proposition 3.10 \cite{Mesbahi_2010_Book}]\label{propos:LPSpectrum}
	The spectrum of $\mathcal{M} = \mathcal{L} + \mathcal{P}$ lies in the region
	\begin{equation}
		S(\mathcal{M}) = \Bigl\{ \lambda \in \mathbb{C}  \,\, \bigl. \bigr| \,\, \bigl| \lambda - \underset{i \in \mathcal{N}}{\operatorname{max}} \, \lambda_{i}(\mathcal{\mathcal{M}}) \bigr| \leqslant \underset{i \in \mathcal{N}}{\operatorname{max}} \, \lambda_{i}(\mathcal{\mathcal{M}}) \Bigr\}
	\end{equation}
	Then $\operatorname{Re}\{\lambda_{i}(\mathcal{M})\} \geqslant 0$ $\forall i \in \mathcal{N}$, i.e., the eigenvalues of $\mathcal{M}$ have non-negative real parts.
\end{proposition}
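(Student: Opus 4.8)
The plan is to establish the region inclusion via the Geršgorin disk criterion (Lemma~\ref{lem:GershogorinDiskCriteria}) applied to $\mathcal{M} = \mathcal{L} + \mathcal{P}$, and then read the sign condition on the real parts directly off the geometry of that region. First I would write down the entries of $\mathcal{M}$ from the Laplacian in \eqref{eq:LaplacianMatrix} together with the (diagonal) pinning matrix: the diagonal entries are $m_{ii} = d_{i} + p_{i} = n_{i}$, while for $i \neq j$ we have $m_{ij} = -a_{ij}$, so $|m_{ij}| = a_{ij}$. Consequently the $i$-th Geršgorin radius is $\sum_{j \neq i} |m_{ij}| = \sum_{j \neq i} a_{ij} = d_{i}$, and Lemma~\ref{lem:GershogorinDiskCriteria} places every eigenvalue of $\mathcal{M}$ in the union of the disks $\{ \lambda \in \mathbb{C} : |\lambda - n_{i}| \leqslant d_{i} \}$, each of which already has leftmost point $n_{i} - d_{i} = p_{i} \geqslant 0$.

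Next I would collapse these $N$ disks into the single disk claimed in the statement. For any $\lambda$ with $|\lambda - n_{i}| \leqslant d_{i}$, the triangle inequality gives $|\lambda - \overline{n}| \leqslant |\lambda - n_{i}| + |n_{i} - \overline{n}| \leqslant d_{i} + (\overline{n} - n_{i}) = \overline{n} - p_{i} \leqslant \overline{n}$, where $\overline{n} = \max_{i \in \mathcal{N}} n_{i}$ and I used $n_{i} = d_{i} + p_{i}$ and $p_{i} \geqslant 0$. For undirected topologies $\mathcal{A}$ is symmetric, hence $\mathcal{M} = \mathcal{L} + \mathcal{P}$ is symmetric (indeed positive semidefinite, being the sum of the PSD graph Laplacian and a nonnegative diagonal matrix), so its eigenvalues are real and the largest dominates every diagonal entry, i.e., $\overline{n} = \max_{i} m_{ii} \leqslant \max_{i \in \mathcal{N}} \lambda_{i}(\mathcal{M})$ by the Rayleigh quotient. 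Since a disk of the form $\{ |\lambda - c| \leqslant c \}$ is monotone in $c \geqslant 0$, chaining these inclusions yields exactly $S(\mathcal{M}) = \{ \lambda : |\lambda - \max_{i} \lambda_{i}(\mathcal{M})| \leqslant \max_{i} \lambda_{i}(\mathcal{M}) \}$.

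Finally I would note that this bounding disk sits in the closed right half-plane: writing $\lambda = x + \mathrm{i} y$, the inequality $|\lambda - c|^{2} \leqslant c^{2}$ rearranges to $x^{2} + y^{2} \leqslant 2 c x$, which forces $x \geqslant 0$ (the disk is tangent to the imaginary axis at the origin). Hence $\operatorname{Re}\{\lambda_{i}(\mathcal{M})\} \geqslant 0$ for all $i \in \mathcal{N}$. The only real care needed is the bookkeeping of the last paragraph, namely merging the Geršgorin disks and relating $\overline{n}$ to $\max_{i} \lambda_{i}(\mathcal{M})$; if one wanted only the non-negativity of the real parts, the positive-semidefiniteness of $\mathcal{L} + \mathcal{P}$ in the undirected case is a one-line shortcut that bypasses Geršgorin entirely, and Assumption~\ref{assump:SpanningTree} would in fact upgrade this to strict positivity, though that is more than the proposition asserts.
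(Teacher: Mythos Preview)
Your argument is correct, but note that the paper does not actually prove this proposition: it is quoted verbatim as Proposition~3.10 from \cite{Mesbahi_2010_Book} and then simply invoked inside the proof of Statement~1 of Theorem~\ref{thm:BidirectionalStability}. So there is no ``paper's own proof'' to compare against. What the paper \emph{does} do, in that surrounding proof, is exactly your first step: it writes out the Geršgorin disks for $\mathcal{M} = \mathcal{L}+\mathcal{P}$ with centers $l_{ii}+p_i$ and radii $\sum_{j\neq i}|l_{ij}|$, and then appeals to the cited proposition to pass to the single disk and conclude $\operatorname{Re}\{\lambda_i(\mathcal{M})\}\geqslant 0$. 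Your write-up fills in precisely the part the paper outsources: the triangle-inequality collapse of the $N$ disks into $\{|\lambda-\overline{n}|\leqslant\overline{n}\}$, the Rayleigh-quotient bound $\overline{n}\leqslant\overline{\lambda}$ in the symmetric case, and the monotonicity of $\{|\lambda-c|\leqslant c\}$ in $c\geqslant 0$. Your closing remark that, for undirected topologies, positive semidefiniteness of $\mathcal{L}+\mathcal{P}$ already gives $\lambda_i\geqslant 0$ directly is also correct and is, in effect, the shortcut one would expect the textbook proof to take; the Geršgorin route you chose is nonetheless in keeping with the tools the paper itself deploys.
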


\begin{lemma}[Theorem \cite{Shivakumar_1974_Article}]\label{lem:LPNonsingular}
	Let a matrix $\mathcal{M} = [m_{ij}] \in \mathbb{R}^{n \times n}$ and
	\begin{equation}
		\mathbb{J} = \left\{ i \in \mathcal{N} \,\, \bigl. \Bigr| \,\, |m_{ii} | > \!\!\!\! \sum_{j=1, j \neq i}^{n} \!\!\! | m_{ij} | \right\} \neq \emptyset .
	\end{equation}
	Then $\mathcal{M}$ is non-singular if $\forall i \notin \mathbb{J}$ $\exists$ $\left\{m_{i i_1}, m_{i_1 i_2}, \dots, m_{i_r j}\right\} \neq \emptyset $ with $j \in \mathbb{J}$.
\end{lemma}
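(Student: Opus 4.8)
The plan is to prove nonsingularity by contradiction, exploiting the largest‑magnitude component of a putative kernel vector — the standard route for diagonal‑dominance arguments (the classical Taussky argument for irreducibly diagonally dominant matrices). A preliminary remark is in order: the statement inherits from the cited reference the standing hypothesis that $\mathcal{M}$ is weakly diagonally dominant, $|m_{ii}| \geq \sum_{j \neq i} |m_{ij}|$ for all $i$ (without it the conclusion is false), which is exactly what makes the set $\mathbb{J}$ of \emph{strictly} dominant rows the right object; in our application $\mathcal{M} = \mathcal{L} + \mathcal{P}$ satisfies this automatically, since $d_i = \sum_{j \neq i} |l_{ij}|$ and the pinning terms $p_i$ only enlarge the diagonal. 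The directed graph relevant to the chain hypothesis is the one with an arc $i \to j$ precisely when $m_{ij} \neq 0$ and $j \neq i$.

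First I would suppose $\mathcal{M}\mathbf{x} = \mathbf{0}$ for some $\mathbf{x} \neq \mathbf{0}$ and let $\mathcal{K}$ be the (nonempty) set of indices $i$ with $|x_i| = \|\mathbf{x}\|_\infty > 0$. The key step is an auxiliary claim: for every $i \in \mathcal{K}$ one has $i \notin \mathbb{J}$, and, in addition, every $j \neq i$ with $m_{ij} \neq 0$ also lies in $\mathcal{K}$. Both facts come out of the $i$‑th scalar equation $m_{ii} x_i = -\sum_{j \neq i} m_{ij} x_j$ after taking moduli: the chain $|m_{ii}|\,|x_i| \leq \sum_{j \neq i}|m_{ij}|\,|x_j| \leq \big(\sum_{j \neq i}|m_{ij}|\big)|x_i| \leq |m_{ii}|\,|x_i|$ is forced to collapse to equalities, and reading the inequalities one at a time gives, respectively, non‑strict dominance at $i$ (hence $i \notin \mathbb{J}$) and $|x_j| = \|\mathbf{x}\|_\infty$ for every $j$ in the support of row $i$.

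Then I would propagate the claim along a chain. Picking any $i_0 \in \mathcal{K}$, the hypothesis (applicable since $i_0 \notin \mathbb{J}$) supplies a directed path $i_0 \to i_1 \to \cdots \to i_r$ with all entries $m_{i_0 i_1}, m_{i_1 i_2}, \dots, m_{i_{r-1} i_r}$ nonzero and $i_r = j \in \mathbb{J}$. A finite induction using the auxiliary claim then shows $i_k \in \mathcal{K}$ and $i_k \notin \mathbb{J}$ for every $k$ along the path; in particular $j = i_r \in \mathcal{K}$, which by the claim forces $j \notin \mathbb{J}$, contradicting $j \in \mathbb{J}$. Hence the kernel is trivial and $\mathcal{M}$ is nonsingular.

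I expect the only delicate point to be the bookkeeping in the equality case: one must keep the two distinct sources of slack — weak diagonal dominance, $\sum_{j\neq i}|m_{ij}| \leq |m_{ii}|$, versus $|x_j| \leq \|\mathbf{x}\|_\infty$ — strictly separated, so that equality in the former yields $i \notin \mathbb{J}$ while equality in the latter yields propagation of the maximal magnitude to all out‑neighbours of $i$. Once that is in place the walk along the chain is a routine induction, and no spectral or Geršgorin estimates are needed here, those serving only the preceding results (Lemma~\ref{lem:GershogorinDiskCriteria}, Proposition~\ref{propos:LPSpectrum}).
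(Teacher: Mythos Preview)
The paper does not prove this lemma: it is quoted verbatim as a known result from \cite{Shivakumar_1974_Article} and is invoked only as a tool inside the proof of Theorem~\ref{thm:BidirectionalStability}. Your argument is the classical Taussky-type contradiction that underlies the original reference and is correct, including your explicit identification of the implicit standing hypothesis of weak diagonal dominance $|m_{ii}| \geq \sum_{j\neq i}|m_{ij}|$ for all $i$, without which the conclusion fails; in the paper's application to $\mathcal{M}=\mathcal{L}+\mathcal{P}$ this is automatic, as you note.
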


\begin{lemma}[Lemma 3.4 \cite{Lewis_2013_Book}/ Theorem 3 \cite{OlfatiSaber_2004_Article}]\label{lem:JordanDecomposition}
	Let $\lambda_{i} = \lambda_{i}(\mathcal{\mathcal{L} + \mathcal{P}})$ $\forall i \in \mathcal{N}$, distinct or repeated. The platoon dynamics \eqref{eq:ClosedLoopFormationErrorDynamics} is asymptotically stable if and only if all the following matrices are Hurwitz
	\begin{equation}
		\bar{\mathbf{A}}_{c_{i}} = \mathbf{A} - \lambda_{i} \mathbf{B} \mathbf{k}^{\top}, \,\, \forall i \in \mathcal{N}
	\end{equation}
\end{lemma}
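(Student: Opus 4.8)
The plan is to block-triangularize the closed-loop matrix $\widetilde{\mathbf{A}}_{c} = \widehat{\mathbf{A}} - (\mathcal{L}+\mathcal{P})\otimes\mathbf{B}\mathbf{k}^{\top} = \mathbf{I}_{N}\otimes\mathbf{A} - (\mathcal{L}+\mathcal{P})\otimes\mathbf{B}\mathbf{k}^{\top}$ by exploiting a Jordan decomposition of $\mathcal{M}=\mathcal{L}+\mathcal{P}$, and then read its spectrum off the diagonal blocks. Since $\mathcal{M}\in\mathbb{R}^{N\times N}$, there is a nonsingular $\mathbf{T}$ with $\mathbf{T}^{-1}\mathcal{M}\mathbf{T}=\mathbf{J}$ in Jordan canonical form: upper triangular, with the eigenvalues $\lambda_{1},\dots,\lambda_{N}$ (counted with multiplicity) on the diagonal and entries $0$ or $1$ on the first superdiagonal. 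In the undirected $r$BDL case $\mathcal{M}$ is symmetric, hence diagonalizable with real $\lambda_{i}$ and $\mathbf{J}$ diagonal; using the Jordan form keeps the argument uniform and covers repeated eigenvalues.

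First I would apply the similarity $\mathbf{W}=\mathbf{T}\otimes\mathbf{I}_{4}$ to $\widetilde{\mathbf{A}}_{c}$. Using the Kronecker identities $(\mathbf{P}\otimes\mathbf{Q})(\mathbf{R}\otimes\mathbf{S})=(\mathbf{P}\mathbf{R})\otimes(\mathbf{Q}\mathbf{S})$ and $(\mathbf{T}\otimes\mathbf{I}_{4})^{-1}=\mathbf{T}^{-1}\otimes\mathbf{I}_{4}$, one obtains
\begin{equation*}
	\mathbf{W}^{-1}\widetilde{\mathbf{A}}_{c}\mathbf{W} = \mathbf{I}_{N}\otimes\mathbf{A} - \mathbf{J}\otimes\mathbf{B}\mathbf{k}^{\top},
\end{equation*}
which, because $\mathbf{J}$ is upper triangular, is block upper triangular with $4\times 4$ diagonal blocks exactly $\mathbf{A}-\lambda_{i}\mathbf{B}\mathbf{k}^{\top}=\bar{\mathbf{A}}_{c_{i}}$, $i\in\mathcal{N}$; the superdiagonal $1$'s of $\mathbf{J}$ only produce off-diagonal blocks $-\mathbf{B}\mathbf{k}^{\top}$, which do not affect the spectrum. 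The characteristic polynomial of a block-triangular matrix factors over its diagonal blocks, so $\det(s\mathbf{I}_{4N}-\widetilde{\mathbf{A}}_{c})=\prod_{i=1}^{N}\det(s\mathbf{I}_{4}-\bar{\mathbf{A}}_{c_{i}})$ and the eigenvalues of $\widetilde{\mathbf{A}}_{c}$ are the union of those of the $\bar{\mathbf{A}}_{c_{i}}$. Since similarity preserves eigenvalues and asymptotic stability of \eqref{eq:ClosedLoopFormationErrorDynamics} is equivalent to $\widetilde{\mathbf{A}}_{c}$ being Hurwitz, $\widetilde{\mathbf{A}}_{c}$ is Hurwitz if and only if every $\bar{\mathbf{A}}_{c_{i}}$ is Hurwitz, which is the claim. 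Note Assumption \ref{assump:SpanningTree} and Proposition \ref{propos:LPSpectrum} are not needed for this equivalence — they only enter later to guarantee that the $\lambda_{i}$ admit stabilizing gains.

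The main subtlety I anticipate is the non-diagonalizable case: the argument must not tacitly assume $\mathcal{M}$ has a basis of eigenvectors. Using the genuine Jordan form and observing that block-upper-triangularity (not block-diagonality) already suffices to factor the characteristic polynomial handles this. A minor point is that $\mathbf{T},\mathbf{J}$ may be complex while $\widetilde{\mathbf{A}}_{c}$ is real; this is harmless, as Hurwitzness concerns eigenvalue locations, which are invariant under similarity over $\mathbb{C}$. Alternatively, a real Schur decomposition of $\mathcal{M}$ gives a real block-triangular reduction with $1\times 1$ and $2\times 2$ diagonal blocks and yields the same conclusion.
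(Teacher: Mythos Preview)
Your proof is correct and follows the same approach as the paper: both apply the similarity $(\mathbf{T}\otimes\mathbf{I}_{4})^{-1}\widetilde{\mathbf{A}}_{c}(\mathbf{T}\otimes\mathbf{I}_{4})$ with $\mathbf{T}$ bringing $\mathcal{L}+\mathcal{P}$ to Jordan form, then read the spectrum off the diagonal blocks $\mathbf{A}-\lambda_{i}\mathbf{B}\mathbf{k}^{\top}$. Your treatment is in fact slightly more careful than the paper's, which refers to the transformed matrix as block diagonal, whereas you correctly note it is only block upper triangular when Jordan blocks of size greater than one occur---the conclusion is unaffected since the characteristic polynomial still factors over the diagonal blocks.
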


We are now ready to state the stability thresholds for the undirected topologies.

\begin{theorem}[Asymptotic Stability for $r$BDL Topologies]\label{thm:BidirectionalStability}
	
	Consider a homogeneous platoon with longitudinal dynamics \eqref{eq:platoonDynamicsTransition}, ruled by the control law \eqref{eq:DistribControlLaw}, and connected through a known and fixed undirected communication topology. The following statements hold:
	\begin{enumerate}
		\item $\mathcal{M} = \mathcal{L} + \mathcal{P}$ is non-singular and all its eigenvalues are real: $\lambda_{i} \in \mathbb{R}^{+}$, $\forall i \in \mathcal{N}$.
		\item The formation error dynamics in \eqref{eq:ClosedLoopFormationErrorDynamics} is asymptotically stable if and only if
		\begin{align}
			& \kappa_{s} > 0, \quad 0 < \kappa_{p} < \kappa_{v} \bigl( 1 + \overline{\lambda} \kappa_{a} \bigr) / \varsigma, \quad \kappa_{a} > -1/\overline{\lambda}, \nonumber \\
			& \kappa_{v} > \bigl( \kappa_{s} \bigl( 1 + \overline{\lambda} \kappa_{a} \bigr)^2 + \varsigma \overline{\lambda} (\kappa_{p})^2 \bigr) \underline{\lambda} \bigl( 1 + \underline{\lambda} \kappa_{a} \bigr) \kappa_{p} \label{eq:ControlGainsUndirected}
		\end{align}
	\end{enumerate}

\end{theorem}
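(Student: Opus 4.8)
The plan is to mirror the proof of Theorem~\ref{thm:LookAheadStability}, replacing the triangular structure of $\mathcal{L}+\mathcal{P}$ valid for look-ahead graphs by the symmetric structure available in the undirected case. For statement~1, since the topology is undirected we have $\mathcal{A}=\mathcal{A}^{\top}$, hence $\mathcal{L}=\mathcal{D}-\mathcal{A}$ is symmetric and $\mathcal{M}=\mathcal{L}+\mathcal{P}$, being symmetric plus diagonal, is symmetric; therefore every $\lambda_i(\mathcal{M})$ is real. Applying Proposition~\ref{propos:LPSpectrum} (equivalently, Geršgorin disks from Lemma~\ref{lem:GershogorinDiskCriteria} centred at $m_{ii}=d_i+p_i$ with radius $\sum_{j\neq i}|m_{ij}|=d_i$) gives $\operatorname{Re}\{\lambda_i\}\ge 0$, so $\lambda_i\ge 0$. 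To upgrade this to strict positivity I invoke Lemma~\ref{lem:LPNonsingular}: the set $\mathbb{J}=\{i: d_i+p_i>d_i\}=\{i: p_i=1\}$ of leader-pinned followers is nonempty by Assumption~\ref{assump:SpanningTree}, and the spanning-tree hypothesis also guarantees that every non-pinned follower is joined by a path in the graph to some node of $\mathbb{J}$; Shivakumar's criterion then yields that $\mathcal{M}$ is non-singular, hence $0\notin\{\lambda_i\}$ and $\lambda_i\in\mathbb{R}^{+}$ for all $i$.

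For statement~2, I would use Lemma~\ref{lem:JordanDecomposition}: the closed loop \eqref{eq:ClosedLoopFormationErrorDynamics} is asymptotically stable if and only if each $\bar{\mathbf{A}}_{c_i}=\mathbf{A}-\lambda_i\mathbf{B}\mathbf{k}^{\top}$ is Hurwitz, where $\lambda_i=\lambda_i(\mathcal{L}+\mathcal{P})>0$ by statement~1. With the extended pair $(\mathbf{A},\mathbf{B})$, the characteristic polynomial $\det(s\mathbf{I}_4-\mathbf{A}+\lambda_i\mathbf{B}\mathbf{k}^{\top})$ is precisely $\phi_i(s)$ of \eqref{eq:EquivalentCharEqLookAheadPartII} with $n_i$ replaced by the positive real $\lambda_i$, Assumption~\ref{assump:Controlability} ruling out any pole–zero cancellation that could hide an unstable mode. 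Running the same Routh--Hurwitz table as in Theorem~\ref{thm:LookAheadStability} then produces, for each mode, the conditions $\kappa_s>0$, $\kappa_a>-1/\lambda_i$, $0<\kappa_p<\kappa_v(1+\lambda_i\kappa_a)/\varsigma$, and $\kappa_v>\bigl(\kappa_s(1+\lambda_i\kappa_a)^2+\varsigma\lambda_i\kappa_p^2\bigr)/\bigl(\lambda_i(1+\lambda_i\kappa_a)\kappa_p\bigr)$.

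It then remains to intersect these families over $i$ and show the result collapses to \eqref{eq:ControlGainsUndirected}. Several reductions are easy: $\kappa_a>-1/\lambda_i$ for all $i$ is equivalent to $\kappa_a>-1/\overline{\lambda}$ since $\lambda\mapsto-1/\lambda$ is increasing; under this bound $1+\lambda_i\kappa_a>0$ for every mode; and the $\kappa_v$-inequality, rewritten as $\lambda_i\kappa_p\bigl(\kappa_v(1+\lambda_i\kappa_a)-\varsigma\kappa_p\bigr)>\kappa_s(1+\lambda_i\kappa_a)^2$, already forces $\kappa_p<\kappa_v(1+\lambda_i\kappa_a)/\varsigma$ once $\kappa_p>0$, so the two-sided $\kappa_p$-bound is subsumed. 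One is therefore left to bound the single scalar map $\lambda\mapsto\bigl(\kappa_s(1+\lambda\kappa_a)^2+\varsigma\lambda\kappa_p^2\bigr)/\bigl(\lambda(1+\lambda\kappa_a)\bigr)$ over the $\lambda_i\in[\underline{\lambda},\overline{\lambda}]$, and exactly as in Theorem~\ref{thm:LookAheadStability} solve $\alpha_i,\beta_i>0$ for $\kappa_v$ and $\kappa_p$ at the extremal modes.

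The main obstacle is precisely this last reduction: showing the worst case among the modes is attained at $\underline{\lambda}$ or $\overline{\lambda}$ and that it matches the particular numerator-at-$\overline{\lambda}$/denominator-at-$\underline{\lambda}$ combination appearing in \eqref{eq:ControlGainsUndirected}. This needs a monotonicity analysis split on the sign of $\kappa_a$, because $1+\lambda\kappa_a$ is decreasing in $\lambda$ when $\kappa_a<0$ and increasing when $\kappa_a\ge 0$, so the extremal mode for each individual constraint switches between $\underline{\lambda}$ and $\overline{\lambda}$; the "only if'' direction is then immediate, since $\underline{\lambda}$ and $\overline{\lambda}$ are themselves eigenvalues of $\mathcal{M}$ and hence their per-mode Routh--Hurwitz conditions are necessary.
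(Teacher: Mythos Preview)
Your proposal is correct and follows essentially the same route as the paper: symmetry of $\mathcal{M}$ for real eigenvalues, Ger\v{s}gorin/Proposition~\ref{propos:LPSpectrum} for non-negativity, Lemma~\ref{lem:LPNonsingular} under Assumption~\ref{assump:SpanningTree} for non-singularity, and then Lemma~\ref{lem:JordanDecomposition} to decouple the closed loop into the per-mode Routh--Hurwitz problems of Theorem~\ref{thm:LookAheadStability} with $n_i$ replaced by $\lambda_i$. The only cosmetic differences are that the paper re-derives Lemma~\ref{lem:JordanDecomposition} inside the proof via an explicit Jordan-form similarity transformation of $\widetilde{\mathbf{A}}_c$, and then passes from the per-$\lambda_i$ inequalities to the extremal form \eqref{eq:ControlGainsUndirected} by direct assertion rather than the monotonicity discussion you (rightly) flag as the remaining work.
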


\begin{proof}[Proof of Statement 1]
	
	From the Laplacian matrix definition in \eqref{eq:LaplacianMatrix}, $l_{ii} \geqslant 0$, and $l_{ij} \leqslant 0$. Also, since $ \sum_{j = 1}^{N} l_{ij} = 0 $, $\forall i \in \mathcal{N}$:
	\begin{equation}\label{eq:Theorem2ProofStatement1}
		| l_{ii} | = \!\!\!\! \sum_{j=1, j \neq i}^{N} \!\!\! | l_{ij} | \geqslant 0 \implies | l_{ii} + p_{i} | \geqslant \!\!\!\! \sum_{j=1, j \neq i}^{N} \!\!\! | l_{ij} |, \,\, i \in \mathcal{N}
	\end{equation}
	\noindent with $p_{i} > 0$ for at least one $i \in \mathcal{N}$, under Assumption \ref{assump:SpanningTree}.
	Using Lemma \ref{lem:GershogorinDiskCriteria}, all the eigenvalues of $\mathcal{M} = \mathcal{L} + \mathcal{P}$ are contained in the union of $N$ disks, centered at $m_{ii} = l_{ii} + p_{i}, $ with radius $r_{i} = \sum_{j=1, j \neq i}^{N} | l_{ij} |$:
	\begin{equation}\label{eq:GersgorinDiskUnion}
		S(\mathcal{M}) \subset \bigcup_{i=1}^{N} \left\{ \lambda \in \mathbb{C} \,\, \bigl. \Bigr| \,\, | \lambda - \bigl( l_{ii} + p_{i} \bigr) | \leqslant r_{i} \right\}.
	\end{equation}

	According to Proposition \ref{propos:LPSpectrum}, the spectrum of $\mathcal{M} = \mathcal{L} + \mathcal{P}$ lies in the range $S(\mathcal{M}) = \Bigl\{ \lambda \in \mathbb{C} \,\, \bigl. \bigr| \,\, \bigl| \lambda - \overline{\lambda} \bigr| \leqslant \overline{\lambda} \Bigr\}$. Thus, implying that $\operatorname{Re}\{\lambda_{i}(\mathcal{M})\} \geqslant 0$ $\forall i \in \mathcal{N}$.
	
	From Lemma \ref{lem:LPNonsingular}, the non-singularity of $\mathcal{M}$ also implies that all its eigenvalues are located at the open right-half plane. Assuming, without loss of generality, that the first $r$ vehicles are pinned to the leader, $p_{1}, p_{2}, \dots, p_{r} > 0$, implies that
	\begin{equation}\label{eq:Theorem2ProofStatement1Lemma2}
		| l_{ii} + p_{i} | > \!\!\!\! \sum_{j=1, j \neq i}^{N} \!\!\! | l_{ij} |, \qquad i \in \{1,2, \dots, r \}
	\end{equation}
	Then $\mathbb{J} \neq \emptyset$, and the strict inequality in \eqref{eq:Theorem2ProofStatement1Lemma2} holds.
	
	To conclude the proof of statement 1, note that for undirected topologies, the symmetry of $\mathcal{M}$ guarantees that its eigenvalues are strictly real-valued numbers. For directed topologies, $\mathcal{M}$ is a lower triangular matrix, and its eigenvalues are integer numbers, $n_{i} = \lambda_{i} = l_{ii} + p_{i}$. 
\end{proof}

\begin{proof}[Proof of Statement 2]
	
	To prove statement 2 we begin by proving Lemma \ref{lem:JordanDecomposition} by computing a similarity transformation of $\widetilde{\mathbf{A}}_{c}$ in \eqref{eq:ClosedLoopFormationErrorDynamics}, with $\mathbf{T} \in \mathbb{R}^{N \times N}$ a nonsingular matrix:
	\begin{align}
		\bar{\mathbf{A}}_{c} &= \bigl( \mathbf{T} \otimes \mathbf{I}_{4} \bigr)^{-1} \bigl( \mathbf{I}_{N} \otimes \mathbf{A} - \left( \mathcal{L} + \mathcal{P} \right) \otimes \mathbf{B} \mathbf{k}^{\top} \bigr) \bigl( \mathbf{T} \otimes \mathbf{I}_{4} \bigr) \nonumber \\
							 &= \mathbf{I}_{N} \otimes \mathbf{A} - \bigl[ \mathbf{T}^{-1} \bigl( \mathcal{L} + \mathcal{P} \bigr) \mathbf{T} \bigr] \otimes \mathbf{B} \mathbf{k}^{\top} \nonumber \\
							 &= \mathbf{I}_{N} \otimes \mathbf{A} - \mathbf{J} \otimes \mathbf{B} \mathbf{k}^{\top} \label{eq:DiagonalTransformation}
	\end{align}
	\noindent where $\mathbf{J} = \mathbf{T}^{-1} \bigl( \mathcal{L} + \mathcal{P} \bigr) \mathbf{T} = \operatorname{diag}\{ \mathbf{J}_{n_{1}}^{(\lambda_{1})}, \mathbf{J}_{n_{2}}^{(\lambda_{2})}, \dots, \mathbf{J}_{n_{k}}^{(\lambda_{k})} \}$, with $\mathbf{J}_{n_{i}}^{(\lambda_{i})} \in \mathbb{R}^{n_{i} \times n_{i}}$ a Jordan block with $\lambda_{i}$ on its main diagonal, and $n_{1} + n_{2} + \cdots + n_{k} = N$.
	
	Then $\bar{\mathbf{A}}_{c}$ is Hurwitz if and only if its block diagonal entries $\bar{\mathbf{A}}_{c_{i}} = \mathbf{A} - \lambda_{i} \mathbf{B} \mathbf{k}^{\top}$ are Hurwitz, $\forall i \in \mathcal{N} $. Given that the eigenvalues of a block diagonal matrix are the union of the eigenvalues of the diagonal blocks \cite{Lewis_2013_Book}. The characteristic equation of $\bar{\mathbf{A}}_{c_{i}}$ is given by \eqref{eq:EquivalentCharEqundirectedPartI} and \eqref{eq:EquivalentCharEqUndirectedPartII}.
	\begin{equation}
		\Phi(s) = \prod_{i = 1}^{N} \operatorname{det} \! \left( s \mathbf{I}_{4} - \mathbf{A} + \lambda_{i} \mathbf{B} \mathbf{k}^{\top} \right) = \prod_{i = 1}^{N} \phi_{i}(s) \label{eq:EquivalentCharEqundirectedPartI}
	\end{equation}
	\begin{equation}\label{eq:EquivalentCharEqUndirectedPartII}
		\phi_{i}(s) \! = \! s^{4} \! + \! \left( \! \frac{1 + \lambda_{i} \kappa_{a}}{\varsigma} \! \right) \!\! s^{3} + \frac{\lambda_{i} \kappa_{v}}{\varsigma} s^{2} + \frac{\lambda_{i} \kappa_{p}}{\varsigma} s + \frac{\lambda_{i} \kappa_{s}}{\varsigma}
	\end{equation}
		
	
		Similarly to Theorem \ref{thm:LookAheadStability}, given that $\varsigma > 0$ and $\lambda_{i} > 0$, $\forall i \in \mathcal{N}$, the Routh-Hurwitz criteria yields:
	\begin{align*}
		& \kappa_{s} > 0, \quad 0 < \kappa_{p} < \kappa_{v} \bigl( 1 + \lambda_{i} \kappa_{a} \bigr) / \varsigma, \quad \kappa_{a} > -1/\lambda_{i}, \nonumber \\
		& \kappa_{v} > \bigl( \kappa_{s} \bigl( 1 + \lambda_{i} \kappa_{a} \bigr)^2 + \varsigma \lambda_{i} (\kappa_{p})^2 \bigr) / \lambda_{i} \bigl( 1 + \lambda_{i} \kappa_{a} \bigr) \kappa_{p}
	\end{align*}
	Therefore, the platoon dynamics $\mathbf{A} - \lambda_{i} \mathbf{B} \mathbf{k}^{\top} $ is asymptotically stable if and only if \eqref{eq:ControlGainsUndirected} is fulfilled. 
\end{proof}

\begin{corollary}[Asymptotic Stability for $\kappa_{s} = 0$ and $r$BDL Topologies]\label{cor:BidirectionalKappaSZero}
	
	Similar to Corollary \ref{cor:LookAheadKappaSZero}, asymptotically stability is ensured if $\kappa_{s} = 0$ for undirected topologies. In this case, the system in \eqref{eq:platoonDynamicsTransition} is asymptotically stable if and only if \eqref{eq:CorollarykKappaSZeroGainsUndirected} is satisfied.
	\begin{equation}
		\kappa_{p} > 0, \quad \kappa_{v} > \varsigma \kappa_{p} / \bigl( 1 + \underline{\lambda} \kappa_{a} \bigr), \quad \kappa_{a} > -1/\overline{\lambda} \label{eq:CorollarykKappaSZeroGainsUndirected}
	\end{equation}
\end{corollary}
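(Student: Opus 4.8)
The plan is to follow the template of Corollary~\ref{cor:LookAheadKappaSZero}, replacing the triangular structure used there by the spectral facts established for undirected graphs in Theorem~\ref{thm:BidirectionalStability}. First I would make explicit what ``without the integrator state'' means: setting $\kappa_{s}=0$ deletes the only gain that couples $s_{i}(t)$ back into the loop, so in the closed loop $\tilde{s}_{i}(t)$ becomes a pure cascade driven by $\tilde{p}_{i}(t)$ and the full $4N$-dimensional error system of \eqref{eq:ClosedLoopFormationErrorDynamics} is at best marginally stable. Hence one works with the reduced third-order model \eqref{eq:platoonDynamicsTransition} (state $[p_i,v_i,a_i]^{\top}$), whose error matrix is still $\widehat{\mathbf A}-(\mathcal L+\mathcal P)\otimes\mathbf B\mathbf k^{\top}$ but now with the $3\times 3$ blocks $\mathbf A,\mathbf B$. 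I would also point out that this is genuinely a fresh computation rather than the limit $\kappa_{s}\to 0$ of \eqref{eq:ControlGainsUndirected}: removing the integrator lowers the polynomial degree, so the Routh--Hurwitz table changes and the resulting bound on $\kappa_{v}$ is strictly weaker.

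Next I would invoke Statement~1 of Theorem~\ref{thm:BidirectionalStability}, which guarantees that under Assumption~\ref{assump:SpanningTree} the matrix $\mathcal M=\mathcal L+\mathcal P$ is nonsingular with a purely real, positive spectrum $\{\lambda_{i}\}_{i\in\mathcal N}$. Applying the similarity transformation $\mathbf T\otimes\mathbf I$ exactly as in \eqref{eq:DiagonalTransformation} (Lemma~\ref{lem:JordanDecomposition}), asymptotic stability of the reduced error dynamics is equivalent to every block $\mathbf A-\lambda_{i}\mathbf B\mathbf k^{\top}$ being Hurwitz. Its characteristic polynomial is the cubic obtained from \eqref{eq:EquivalentCharEqUndirectedPartII} by setting $\kappa_{s}=0$ and dividing by $s$, namely $\bar\phi_{i}(s)=s^{3}+\frac{1+\lambda_{i}\kappa_{a}}{\varsigma}s^{2}+\frac{\lambda_{i}\kappa_{v}}{\varsigma}s+\frac{\lambda_{i}\kappa_{p}}{\varsigma}$, i.e.\ the analogue of \eqref{eq:EquivalentCharEqLookAheadKappaSZero} with $n_{i}$ replaced by $\lambda_{i}$.

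Then I would run the Routh--Hurwitz test on $\bar\phi_{i}$: since $\varsigma>0$ and $\lambda_{i}>0$, stability of the $i$-th block is equivalent to $1+\lambda_{i}\kappa_{a}>0$, $\kappa_{p}>0$, and $\kappa_{v}>\varsigma\kappa_{p}/(1+\lambda_{i}\kappa_{a})$. The last task is to aggregate these $N$ conditions over the spectrum: $\kappa_{a}>-1/\lambda_{i}$ for all $i$ is tightest at the largest eigenvalue, giving $\kappa_{a}>-1/\overline{\lambda}$, which in turn forces $1+\lambda_{i}\kappa_{a}>0$ for every $i$; with positivity ensured, $\kappa_{v}>\varsigma\kappa_{p}/(1+\lambda_{i}\kappa_{a})$ is worst at the eigenvalue minimizing $1+\lambda_{i}\kappa_{a}$, yielding $\kappa_{v}>\varsigma\kappa_{p}/(1+\underline{\lambda}\kappa_{a})$. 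Collecting these gives exactly \eqref{eq:CorollarykKappaSZeroGainsUndirected}, and since every step is an equivalence the ``if and only if'' follows.

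The part I expect to be the main obstacle is not the cubic Routh--Hurwitz step, which is routine, but the bookkeeping in the aggregation: the monotonicity of $1+\lambda_{i}\kappa_{a}$ in $\lambda_{i}$ flips sign with $\kappa_{a}$, so one must argue carefully that $\underline{\lambda}$ and $\overline{\lambda}$ are indeed the binding eigenvalues in the stated inequalities, and one must also be careful to justify — rather than merely assert — the reduction to the third-order model by identifying the discarded $s$-mode as the (marginally stable) cascade mode that $\kappa_{s}=0$ decouples.
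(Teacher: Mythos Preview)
Your proposal is correct and follows the same route as the paper: the paper's proof consists of the single sentence ``follow the lines of Corollary~\ref{cor:LookAheadKappaSZero}, replacing $n_{i}$ with $\lambda_{i}$ in \eqref{eq:EquivalentCharEqLookAheadKappaSZero}--\eqref{eq:RouthHurwitzCriteriaLookAheadKappaSZero},'' which is exactly your plan of invoking Statement~1 of Theorem~\ref{thm:BidirectionalStability} for the spectrum, decoupling via Lemma~\ref{lem:JordanDecomposition}, and running the cubic Routh--Hurwitz test blockwise. Your write-up is in fact more careful than the paper's, since you make explicit the reduction to the third-order model and flag the sign-dependent monotonicity of $1+\lambda_{i}\kappa_{a}$ in the aggregation step---a subtlety the paper does not address.
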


\begin{proof}
	The proof follows the same lines of Cor.~\ref{cor:LookAheadKappaSZero}, simply replacing $n_{i}$ with $\lambda_{i}$ in the characteristic polynomial in \eqref{eq:EquivalentCharEqLookAheadKappaSZero}-\eqref{eq:RouthHurwitzCriteriaLookAheadKappaSZero}.
\end{proof}

\begin{remark}[Null steady-state error for $\kappa_{s} \neq 0$]\label{rem:KappaSEffectZeroSSError}
	
	Consider the closed-loop system for the $i$-th vehicle in Fig.~\ref{fig:blockdiagram}.
	
	\begin{figure}[htb]
		\centering
		\includegraphics[width=\linewidth]{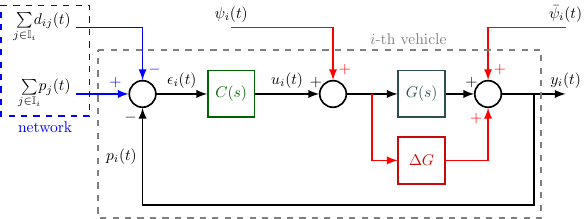}
		\caption{Closed-loop block diagram for the $i$-th vehicle.} 
	\label{fig:blockdiagram}
\end{figure}

Defining $\mathbf{C} = \operatorname{diag}\{c_{s}, c_{p}, c_{v}, c_{a}\}$ with $c_{\#} \in \{0,1\}$ for $\# \in \{p, v, a\}$, and denoting by $s$ the Laplace transform variable. We feed-back only the position, i.e., only $c_{p} = 1$, then $ y_{i}(t) = \mathbf{C} \mathbf{x}_{i}(t) = p_{i}(t) $, and the tracking spacing error becomes:
\begin{align}\label{eq:TrackingSpacingError}
	\epsilon_{i}(t) \! &= \! - \! \sum_{j \in \mathbb{I}_i} \! y_{i}(t) \! - \! y_{j}(t) \! + \! d_{ij}(t) \! = \! - p_{i}(t) \! + \!\! \sum_{j \in \mathbb{I}_i} p_{j}(t) \! - \! d_{ij}(t) \nonumber
\end{align}

The equivalent transfer function of the control law in \eqref{eq:DistribControlLaw} is
\begin{equation}\label{eq:PIDDControllerTransferFuntion}
	C(s) = \frac{U_{i}(s)}{\mathcal{E}_{i}(s)} = \frac{\kappa_{s}}{s} + \kappa_{p} + \kappa_{v} s + \kappa_{a} s^{2}
\end{equation}
and from the longitudinal state-space model in \eqref{eq:platoonDynamicsTransition}, the transfer function from the disturbance $\psi_{i}(t)$ to the output $y_{i}(t)$ for $\epsilon_{i}(t) = 0$ is:
\begin{equation}\label{eq:LongModelTransferFunction}
	G(s) = \frac{Y_{\! i}(s)}{\Psi_{\! i}(s)} = \mathbf{C} (s \mathbf{I}_{3} - \mathbf{A})^{-1} \mathbf{B} = \frac{1}{s^{2} \left( \varsigma s + 1 \right)}
\end{equation}
Then the tracking spacing error between $\psi_{i}(t)$ and $y_{i}(t)$ is
\begin{equation}
	E_{i}(s) = - \frac{G(s)}{1 + C(s) G(s)} \Psi_{i}(s) \label{eq:TrackingSpacingErrorDisturbance}
\end{equation}
For a step disturbance at the process input with gain $\kappa_{\psi}^{i} \neq 0$, the null steady-state spacing error is obtained only if $\kappa_{s} \neq 0$, see \eqref{eq:ErrorInputSteadyState}. 

This is also the case for parametric uncertainties, represented by $\Delta G$ in Fig.~\ref{fig:blockdiagram}. A detailed discussion about this aspect is given in \cite{Skogestad_2005_Book}. Finally, robustness to disturbances at the process output, represented by $\bar{\psi}_{i}$ in Fig.~\ref{fig:blockdiagram}, is ensured by similar arguments. 

\begin{figure*}[!t]
	\normalsize   
	\begin{equation}\label{eq:ErrorInputSteadyState}
		\varepsilon_{\text{ss}}^{i} = \lim_{t \rightarrow \infty} \varepsilon_{i}(t) = \lim_{s \rightarrow 0} s E_{i}(s) = - \lim_{s \rightarrow 0} s \frac{1}{\varsigma s^{3} + \left( 1 + \kappa_{a} \right) s^{2} + \kappa_{v} s + \kappa_{p} + \nicefrac{\kappa_{s}}{s}} \frac{\kappa_{\psi}^{i}}{s} = \begin{cases}
			-\nicefrac{\kappa_{\psi}^{i}}{\kappa_{p}}, & \kappa_{s} = 0, \\
			0, & \kappa_{s} \neq 0.
		\end{cases}
	\end{equation}	
	\hrulefill    
	\vspace*{4pt} 
\end{figure*}

\end{remark}

\begin{remark}[Restrictions on $\kappa_{p}$]\label{rem:KappaPConstraints}

In \eqref{eq:ControlGainsLookAhead} and \eqref{eq:ControlGainsUndirected}, one can notice that $\kappa_{p}$ depends on $\kappa_{v}$ and vice-versa. Additionally, the expression for $\kappa_{p}$ is a special case of that obtained for $\kappa_{v}$ when $\kappa_{s} = 0$, for a particular $\lambda_{i}$. Thus, the restrictions on it reduces to $\kappa_{p} > 0$. 

\end{remark}

%
%
%
%
%

%
%
%

\section{Simulation Results and Analysis}\label{sec:ResultsAnalysis}
	
In this section, we carried out experiments using \emph{CARLA} (\textit{Car Learning to Act}), a well-known open-source simulator for driving research that provides a realistic environment for testing and developing algorithms for autonomous vehicles \cite{Dosovitskiy_2017_InProceedings}.
We start by showing that bounded step disturbances (like road slope, wind gusts, and parametric modeling uncertainties) lead to formation errors in the platoon. Then, we demonstrate how the proposed control law can address this problem with properly designed gains.
%

Our platoon, illustrated in Fig.~\ref{fig:carla_simulator}, consists of ten identical vehicles, one leader, and nine followers, whose parameters are shown in Tab.~\ref{tab:VehicleParameters}.
The world file employed in the tests was the \emph{Town06}, a \emph{CARLA} scenario with a long straight road. To avoid modifying it, the road slope and wind disturbances were emulated by applying external forces to the vehicles' center of mass.

\begin{figure}[t]
    \centering
    \includegraphics[width=\linewidth]{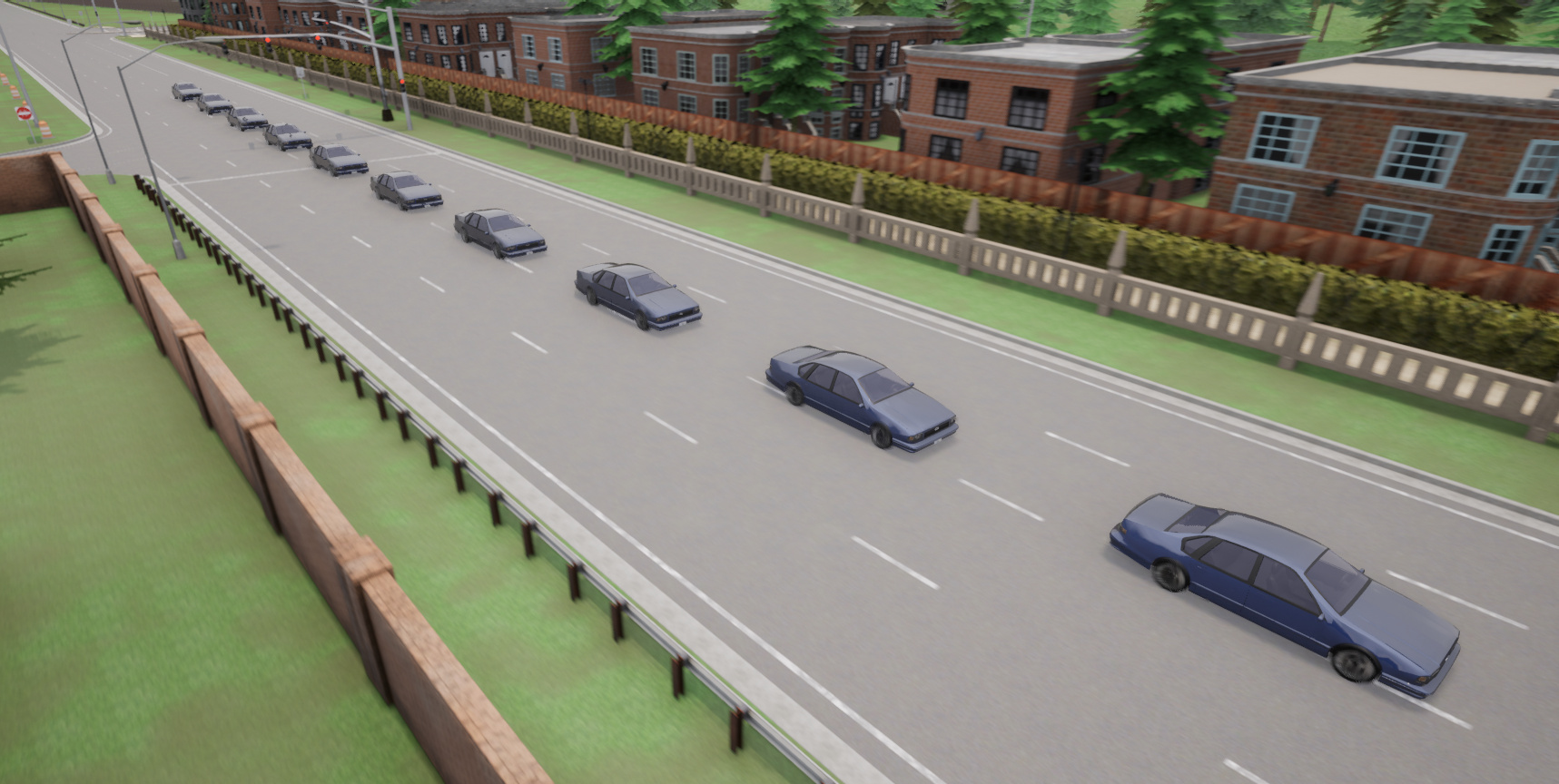}
        \caption{Homogeneous vehicular platoon in the \emph{CARLA} simulator with one leader and nine followers subjected to uncertainties in the parameters, road slope, and wind disturbances.}
    \label{fig:carla_simulator}
\end{figure}

\begin{table}[htb]
	\centering
	\caption{Vehicle's parameters estimated from \emph{CARLA}.}
	\label{tab:VehicleParameters}
	\begin{tabular}{c|c|c|c}
		\toprule \multicolumn{4}{c}{Parameter values} \\ \midrule
		$m = \SI{1613}{\kg}$ 	& $\eta = 1.0$	& $r = \SI{0.34}{\m}$ 				& $g = \SI{9.8}{\m/\s^{2}}$ \\
		$\varsigma = 0.15$ 		& $c_{d} = 0.62$ & $\rho = \SI{1.225}{\kg/\m^{3}}$ 	& $\mu = 0.01$ \\ \bottomrule
	\end{tabular}
\end{table}

Here, the parametric uncertainties result from a difference between some parameters in Table \ref{tab:VehicleParameters} and the values used internally by \emph{CARLA}. For example, the simulator employs three different $\varsigma$ values for different operating conditions, but in the cancellation process, we only use one. We also do not have direct access to the values of $\eta$ and $\mu$, then we used arbitrarily estimated values. Meanwhile, $c_{d}$ was approximated by the formula $c_{d} = c_k \big[1.6 + 0.00056 (m - 765) \big]$, where $c_k$ is the drag coefficient provided by \emph{CARLA}. Due to the lack of more precise information about all these parameters, we have an imperfect torque linearization in Eq.~\eqref{eq:FeedbackLinearization}.

For all trials, the agents start in $p_{i}(0) = - i \cdot d_{ij}$ (with $d_{ij} = \SI{10}{\m}$ being the reference spacing distance), initial speed $v_{i}(0) = \SI{15}{\m/\s}$, and null acceleration $a_{i}(0) = \SI{0}{\m/\s}$. 
The platoon is further disturbed as the leader accelerates 
\[
    u_{0}(t) = 
    \begin{cases}
	\SI{1}{\nicefrac{\m}{\s^2}}, & 30 \leqslant t \leqslant 35 \\
	\SI{0}{\nicefrac{\m}{\s^2}}, & \text{otherwise}
    \end{cases}.
\]
This is followed by a step of $\SI{10}{\degree}$ in the road slope, subsequently applied to each follower as they cross a specified position (reached after the leader travels for 100 seconds), 
\[
    \theta_{0}(t) = 
   \begin{cases}
        \SI{10}{\degree}, & p_{0}(t) \geqslant \SI{1680}{\m}\\
        \SI{0}{\degree}, & \text{otherwise}
    \end{cases}.
\]

Finally, a similar procedure is performed for the wind gust disturbance, with a step of $-\SI{20}{\m/\s}$, reaching each platoon's agent as they travel,
\[
    v_{w,0}(t) = 
    \begin{cases}
        \SI{20}{\nicefrac{\m}{\s}}, & t \geqslant \SI{150}{\s} \\
        \SI{0}{\nicefrac{\m}{\s}}, & \text{otherwise}
    \end{cases}.
\]


To satisfy Theorems \ref{thm:LookAheadStability}-\ref{thm:BidirectionalStability} and Corollaries \ref{cor:LookAheadKappaSZero}-\ref{cor:BidirectionalKappaSZero}, we used the controller gains given in Tab.~\ref{table:ControllerGains}, and the simulations were conducted on the network topologies of Figs.~\ref{fig:Topologies1}-\ref{fig:Topologies2}.

\begin{table}[htb]
	\centering
	\caption{Control gains for each network topology.}
	\label{table:ControllerGains}
        \scalebox{1}{\begin{tabular}{c|cccc|ccc}
			\toprule
			& \multicolumn{4}{c|}{ Theorems \ref{thm:LookAheadStability}-\ref{thm:BidirectionalStability} } & \multicolumn{3}{c}{ Corollaries \ref{cor:LookAheadKappaSZero}-\ref{cor:BidirectionalKappaSZero} } \\
			\midrule
            Topology & $\kappa_{s}$ & $\kappa_{p}$ & $\kappa_{v}$ & $\kappa_{a}$ & $\kappa_{p}$ & $\kappa_{v}$ & $\kappa_{a}$ \\
			\midrule
			PF & 0.150 & 1.0 & 3.450 & 1.000 &1.0 &2.150 & 1.000 \\
			PFL & 0.075 & 1.0 & 3.225 & 1.500 &1.0 &2.075 & 1.500 \\
			TPF & 0.075 & 1.0 & 3.225 & 1.500 &1.0 &2.075 & 1.500 \\
			TPFL & 0.050 & 1.0 & 3.150 & 1.667 &1.0 &2.050 & 1.667 \\
			$r$PF & 0.030 & 1.0 & 3.090 & 1.800 &1.0 &2.030 & 1.800 \\
			$r$PFL & 0.025 & 1.0 & 3.075 & 1.833 &1.0 &2.025 & 1.833 \\ \midrule
			BD & 0.010 & 1.0 & 5.086 & 1.743 &1.0 &2.286 & 1.743 \\
			BDL & 0.010 & 1.0 & 1.052 & 1.795 &1.0 &2.107 & 1.795 \\
			$r$BD & 0.010 & 1.0 & 1.423 & 1.890 &1.0 &2.175 & 1.890 \\
			$r$BDL & 0.010 & 1.0 & 1.103 & 1.900 &1.0 &2.103 & 1.900 \\
			\bottomrule
	\end{tabular}}
\end{table}


Fig.~\ref{fig:SpacingErrorTopologiesCarla} presents the spacing error between consecutive vehicles for the proposed control law \eqref{eq:DistribControlLaw}. In Figures \ref{fig:SpacingErrorTopologiesCarla} (a), (c), (e), (g), (i), (k), (m), (o), (q), and (s) we demonstrate the effects of the disturbances in all evaluated topologies, while in Figures \ref{fig:SpacingErrorTopologiesCarla} (b), (d), (f), (h), (j), (l), (n), (p), (r) and (t) we used a term $\kappa_{s} \neq 0$ to reach null steady-state errors.

As one can easily see in all experiments (in some more than others), there is a small initial error caused by the imperfect cancellation of the system's nonlinearities, an effect that becomes evident after the leader accelerates around the first 30 seconds. Here, it is possible to say that our adequate parameter estimates (Tab.~\ref{tab:VehicleParameters}) were sufficient to mitigate these effects, but even so, in the experiments with the control integral term, this issue was completely eliminated.

Still analyzing the experiments in general, it is noticeable that the most deleterious effect on the platooning control was the road slope, causing significant spacing errors between agents. In this situation, the contributions of the proposed integral control action are even more evident, since it is capable of rejecting the disturbance in all experiments. 
At this point, it is important to highlight that $\SI{10}{\degree}$, about $18\%$ of road slope, is a significant value. Just as a reference, the AASHTO (\textit{American Association of State Highway and Transportation Officials}) establishes guidelines for the longitudinal slope of lanes, which usually vary from $3\%$ to $6\%$, depending on several factors, such as the category of the highway and the speed of travel \cite{Hancock_2013_Article}. 
Finally, a smaller but also significant effect is caused by the wind effect, duly rejected by our proposed controller.

Now considering only topologies in which all agents have direct communication with the leader, it is possible to observe that the first follower always has a non-null spacing error, while others have almost null spacing errors when $k_{s} = 0$. According to \cite{Zheng_2016_Article}, the reason is that all the followers present a similar behavior due to their homogeneous dynamics and given that they have information about the state of the leader, mitigating the spacing errors among the followers. In any case, the proposed integral action is capable of canceling out steady-state errors caused by constant disturbances acting on all agents in the platoon.

Furthermore, there were no collisions between vehicles within the platoon. This observation holds true even for the BD topology, as depicted in Fig. \ref{fig:SpacingErrorTopologiesCarla} (n), where a minimum error of $-\SI{5}{\m}$ is recorded for a desired spacing distance of $\SI{10}{\m}$. We also note that the proposed control law ensures effective disturbance rejection for a rigid spacing policy, which indicates its potential employment on more flexible policies, including constant and variable time headway spacing strategies.

\begin{figure*}
	\centering
	\subfloat[PF  with $k_{s} = 0$.]{\includegraphics[width=0.243\textwidth]{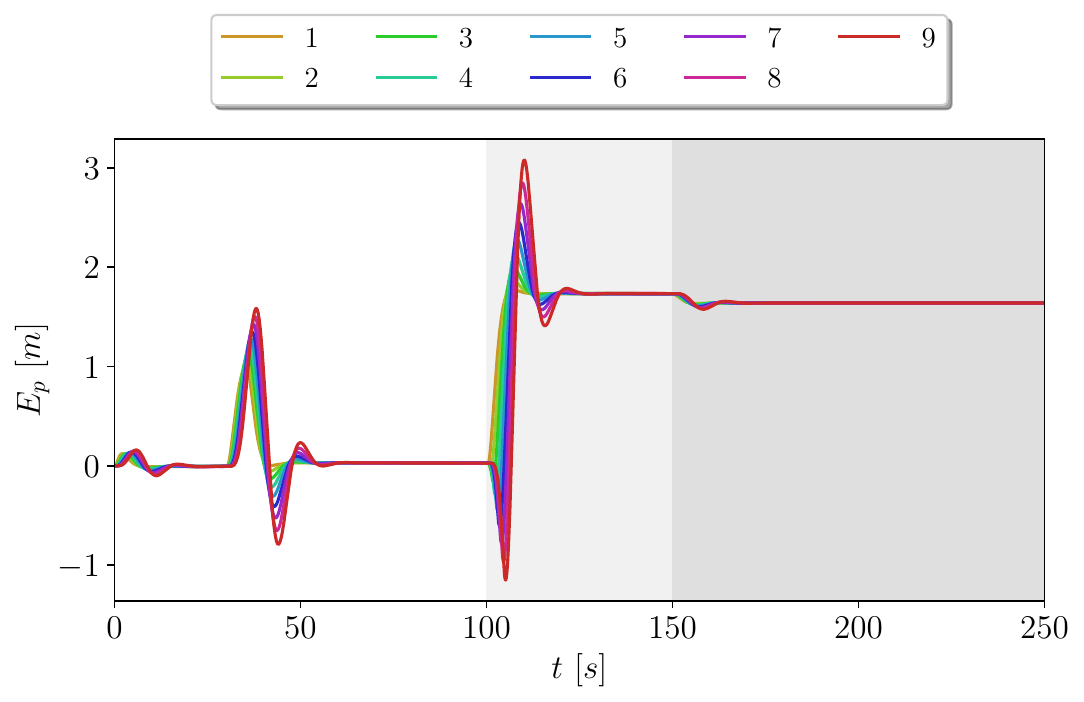} \label{subfig:PF_nointeg}}
	\subfloat[PF  with $k_{s} > 0$.]{\includegraphics[width=0.243\textwidth]{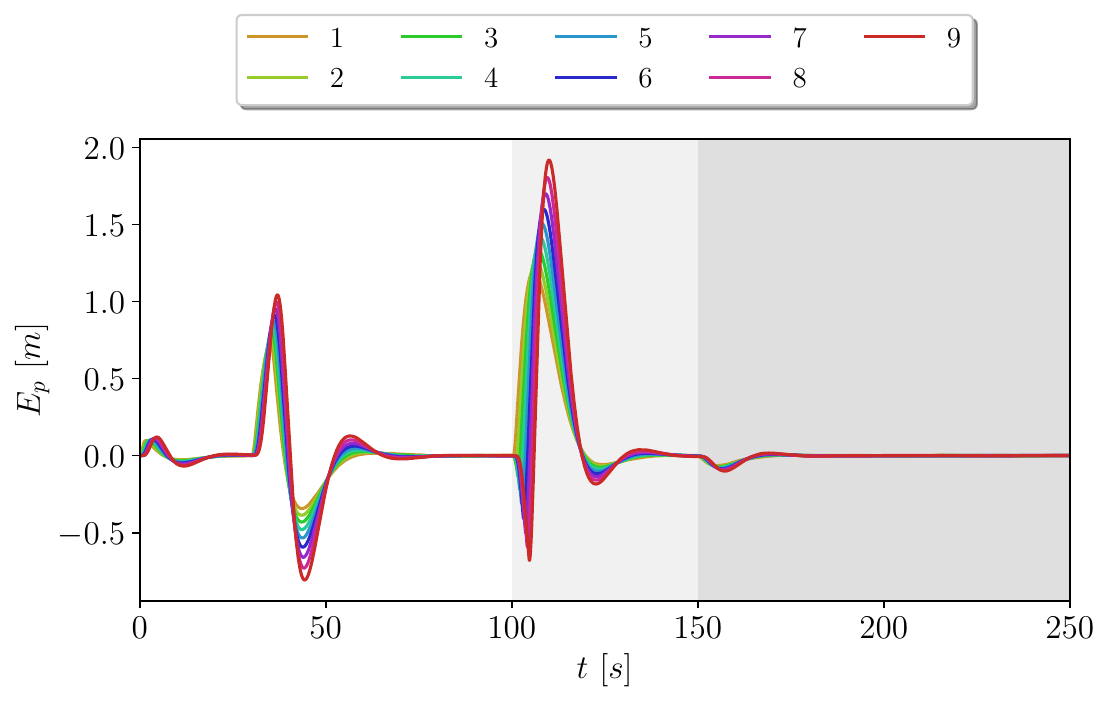} \label{subfig:PF_integ}}
	\subfloat[PFL  with $k_{s} = 0$.]{\includegraphics[width=0.243\textwidth]{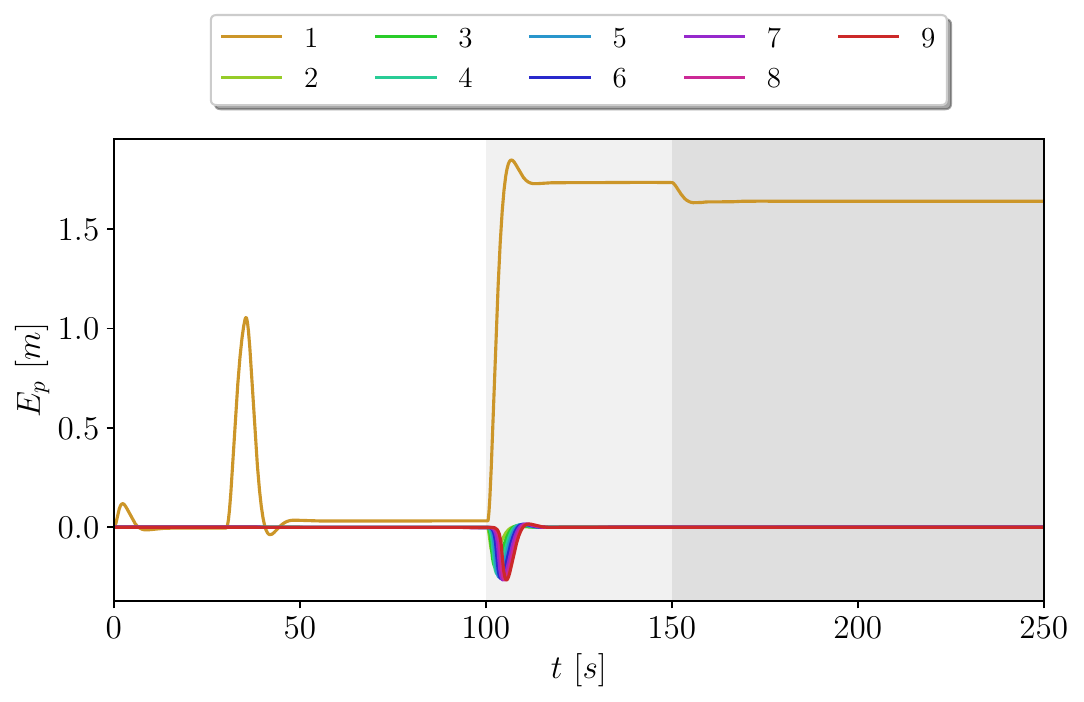} \label{subfig:PFL_nointeg}}  
	\subfloat[PFL  with $k_{s} > 0$.]{\includegraphics[width=0.243\textwidth]{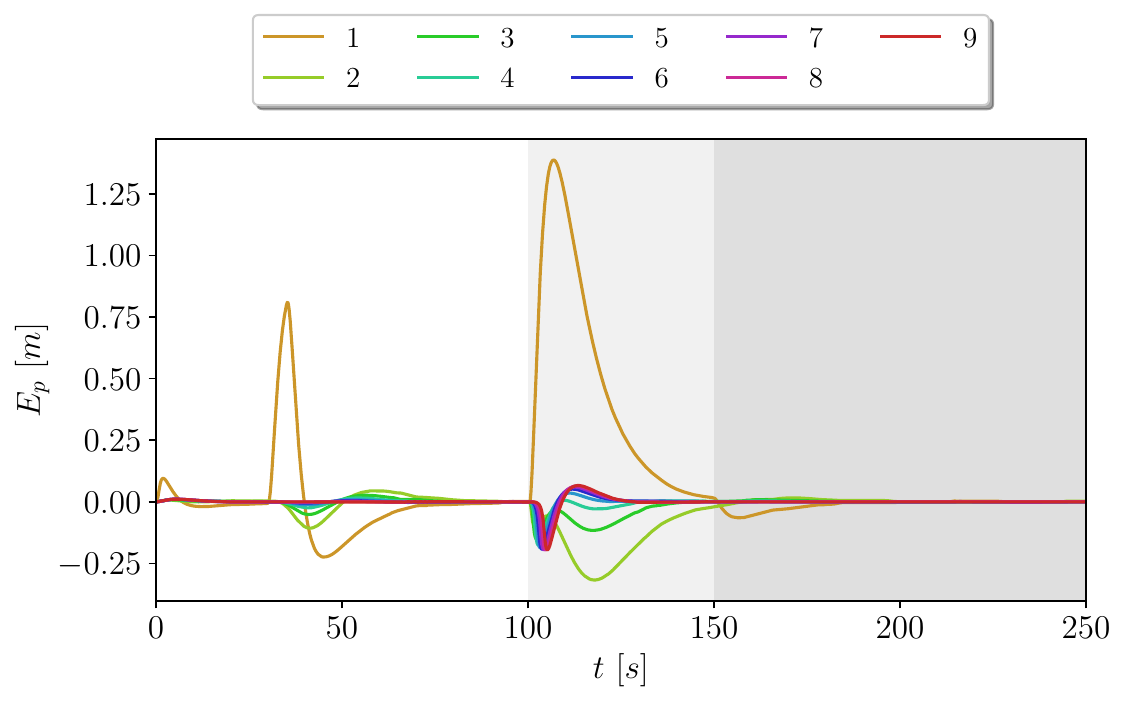} \label{subfig:PFL_integ}} \\
	\subfloat[TPF  with $k_{s} = 0$.]{\includegraphics[width=0.243\textwidth]{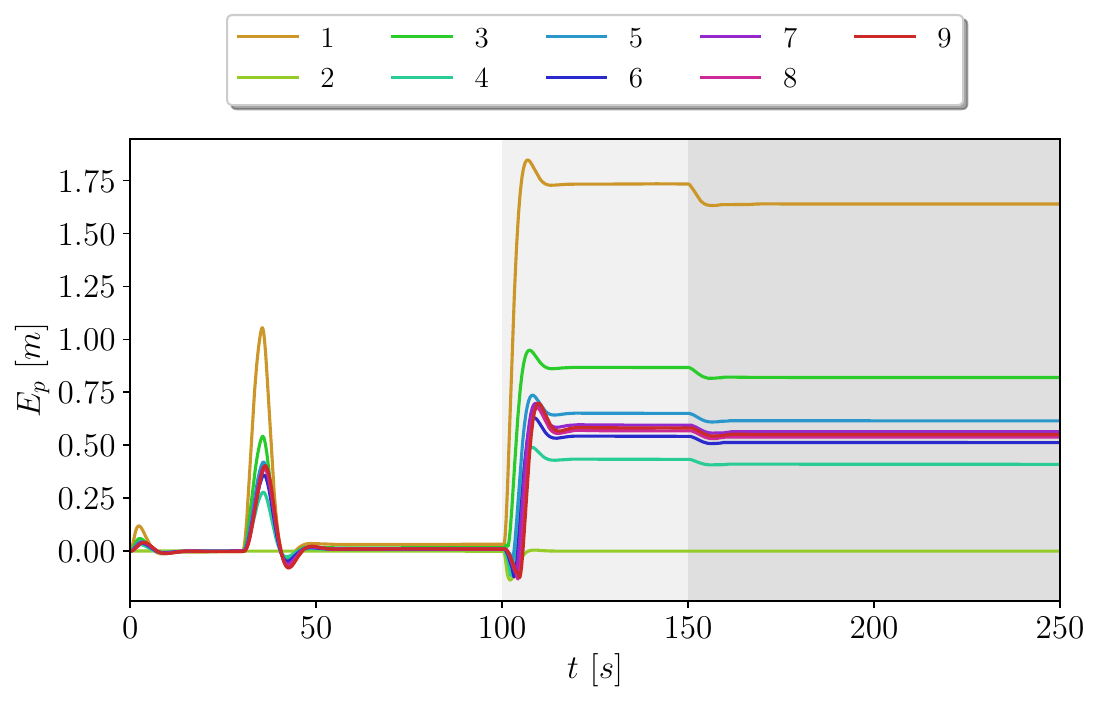} \label{subfig:TPF_nointeg}}
	\subfloat[TPF  with $k_{s} > 0$.]{\includegraphics[width=0.243\textwidth]{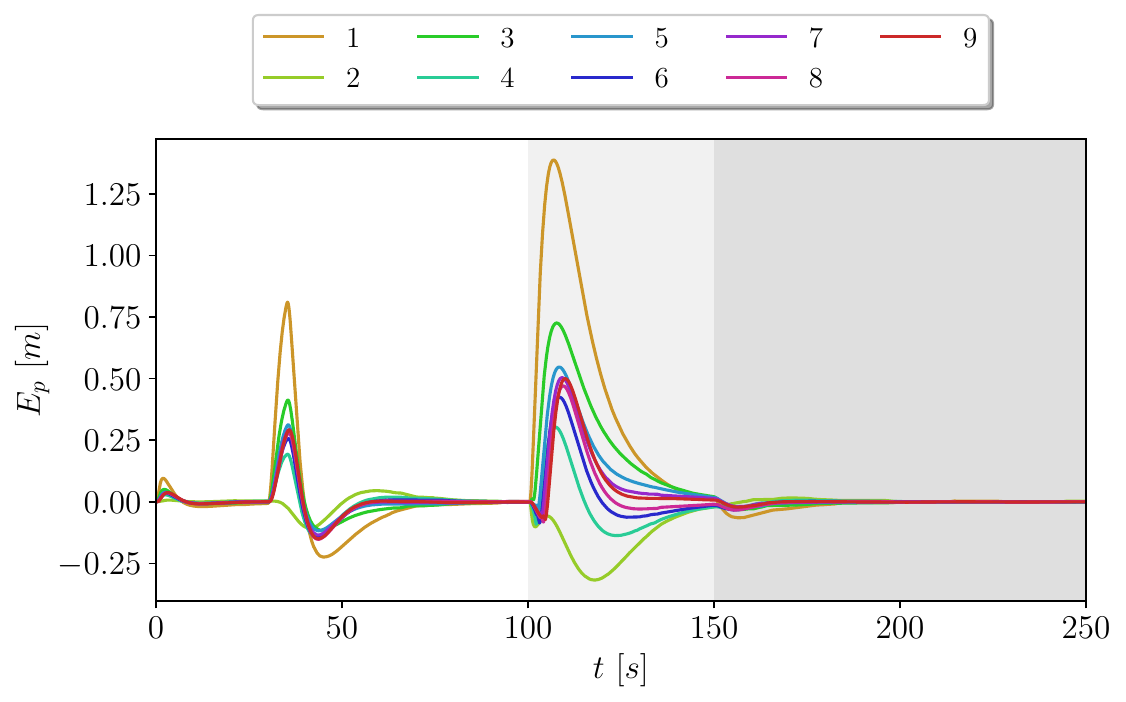} \label{subfig:TPF_integ}}
	\subfloat[TPFL  with $k_{s} = 0$.]{\includegraphics[width=0.243\textwidth]{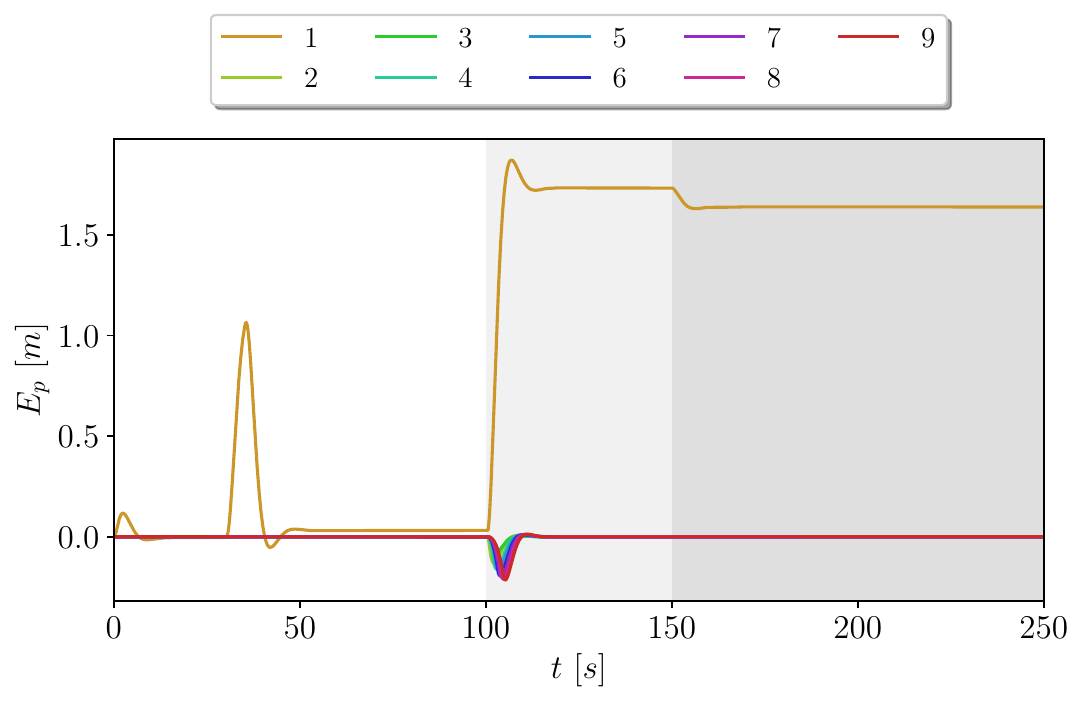} \label{subfig:TPFL_nointeg}}
	\subfloat[TPFL  with $k_{s} > 0$.]{\includegraphics[width=0.243\textwidth]{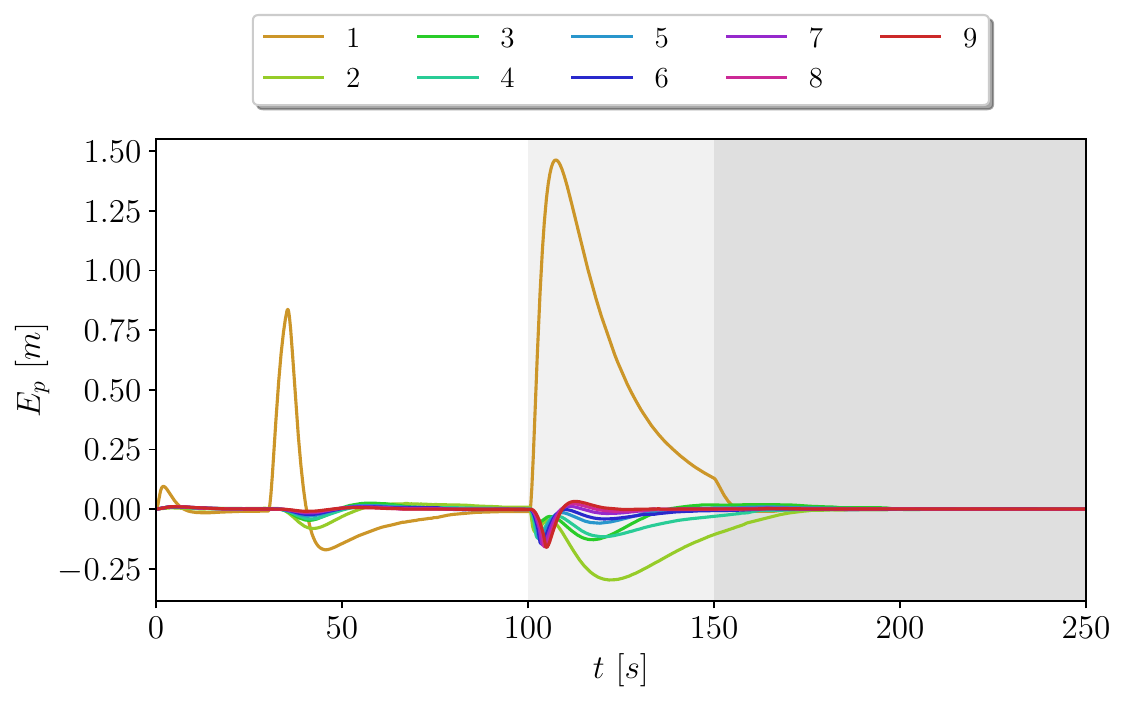} \label{subfig:TPFL_integ}} \\
	\subfloat[$r$PF  with $r = 5$ and $k_{s} = 0$.]{\includegraphics[width=0.243\textwidth]{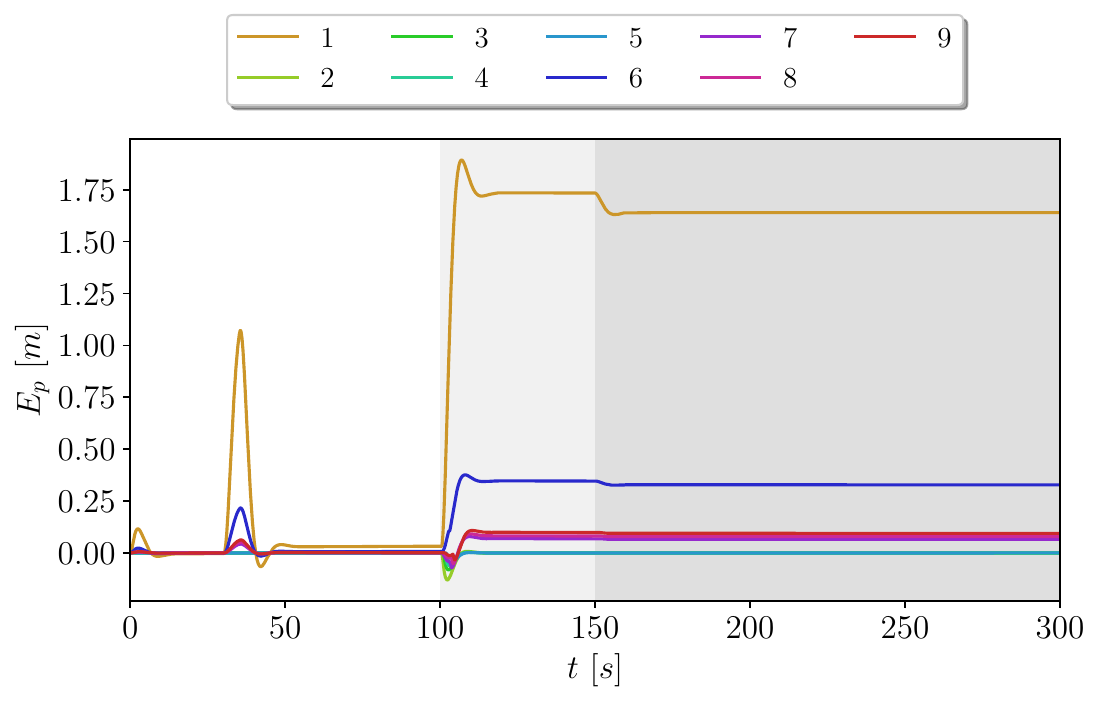} \label{subfig:rPF_nointeg}}
	\subfloat[$r$PF  with $r = 5$ and $k_{s} > 0$.]{\includegraphics[width=0.243\textwidth]{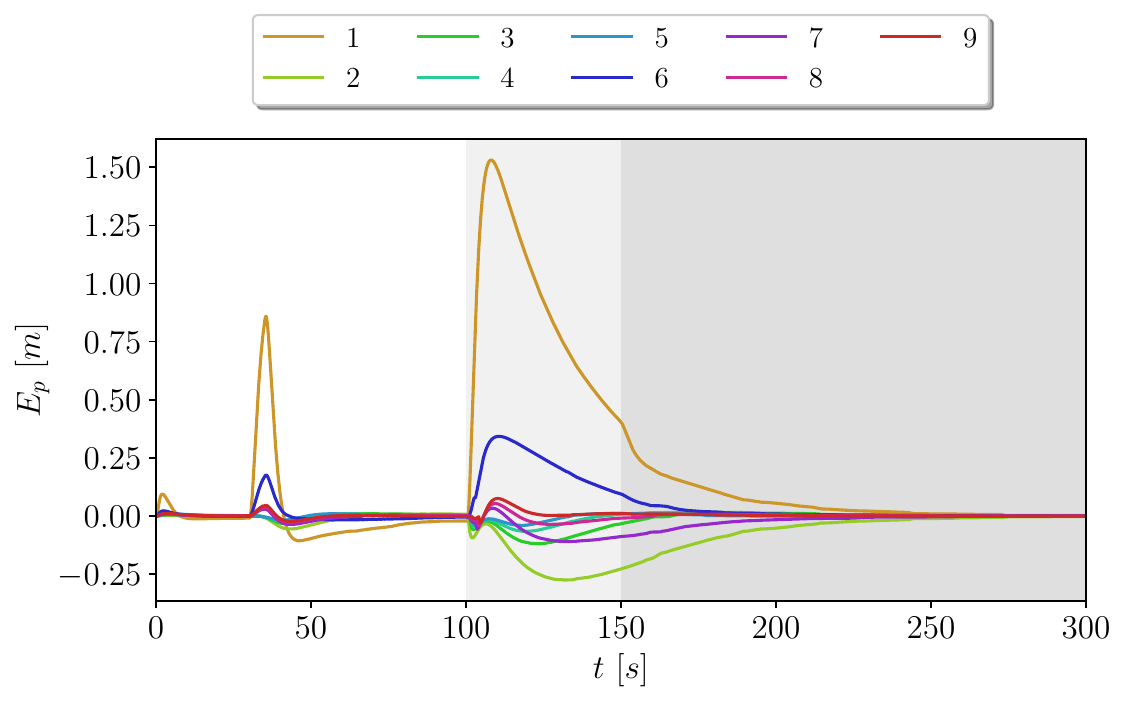} \label{subfig:rPF_integ}}
	\subfloat[$r$PFL  with $r = 5$ and $k_{s} = 0$.]{\includegraphics[width=0.243\textwidth]{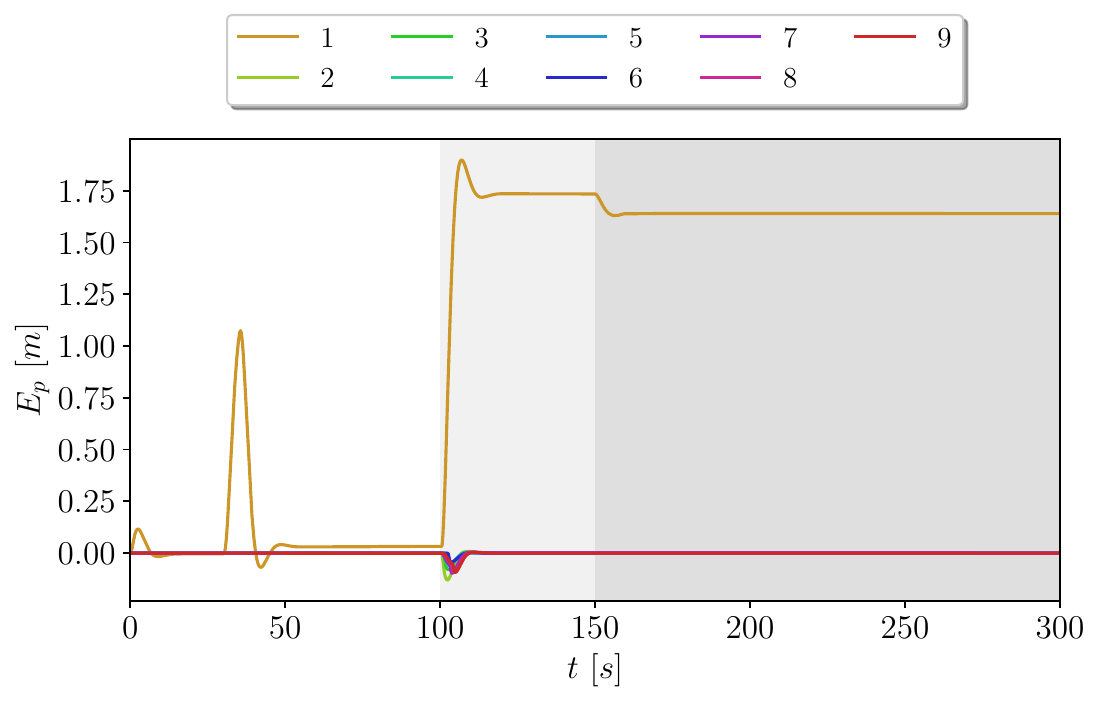} \label{subfig:rPFL_nointeg}}
	\subfloat[$r$PFL  with $r = 5$ and $k_{s} > 0$.]{\includegraphics[width=0.243\textwidth]{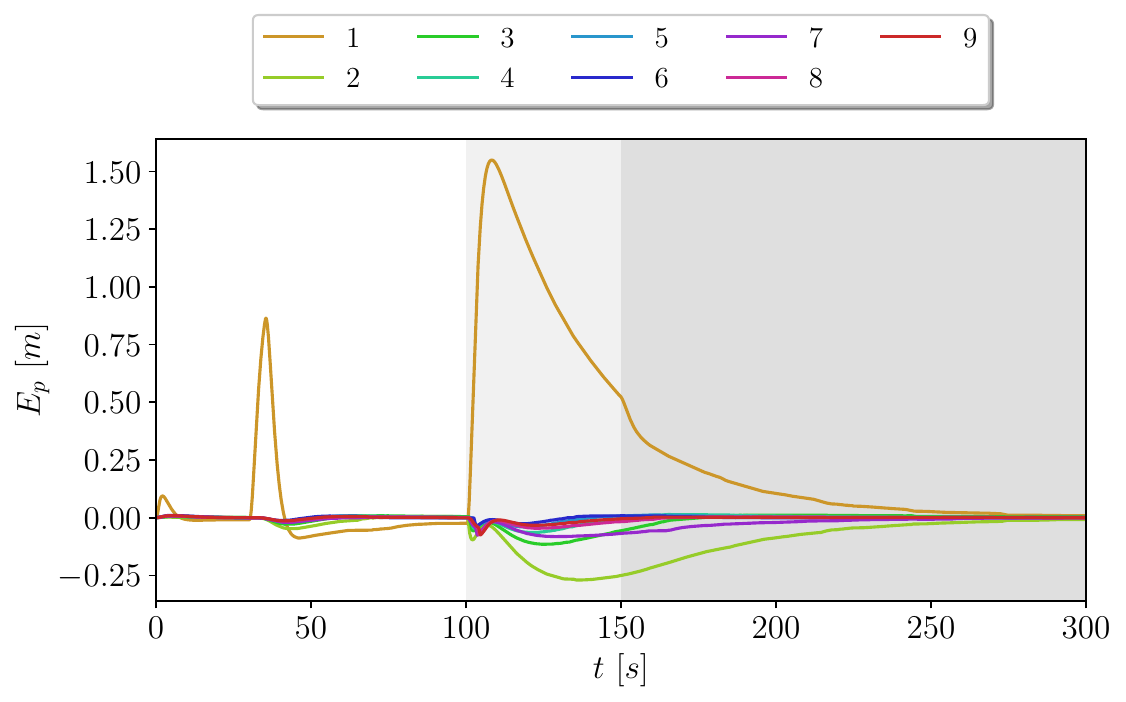} \label{subfig:rPFL_integ}} \\
	\subfloat[BD  with $k_{s} = 0$.]{\includegraphics[width=0.243\textwidth]{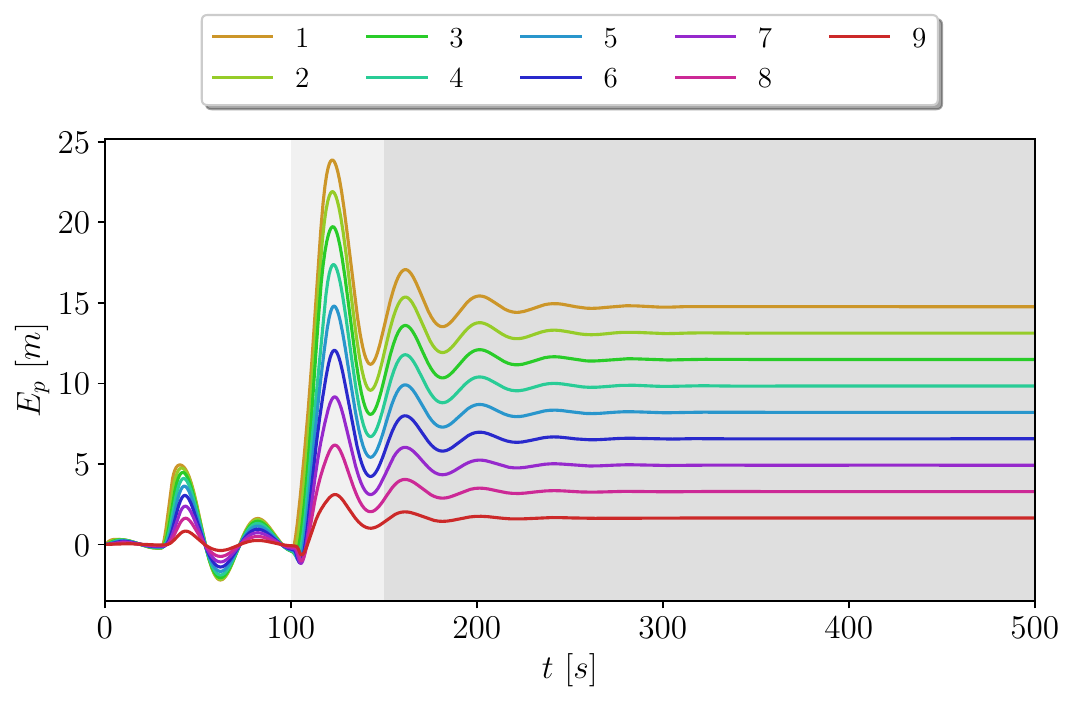} \label{subfig:BD_nointeg}}
	\subfloat[BD  with $k_{s} > 0$.]{\includegraphics[width=0.243\textwidth]{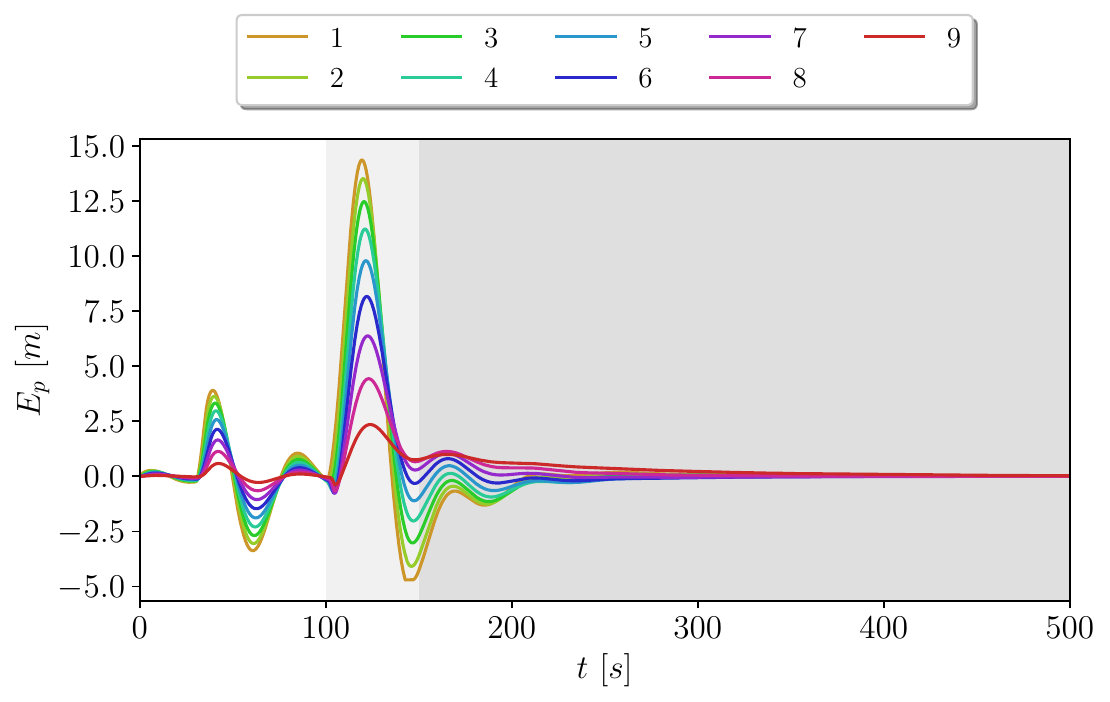} \label{subfig:BD_integ}}
	\subfloat[BDL  with $k_{s} = 0$.]{\includegraphics[width=0.243\textwidth]{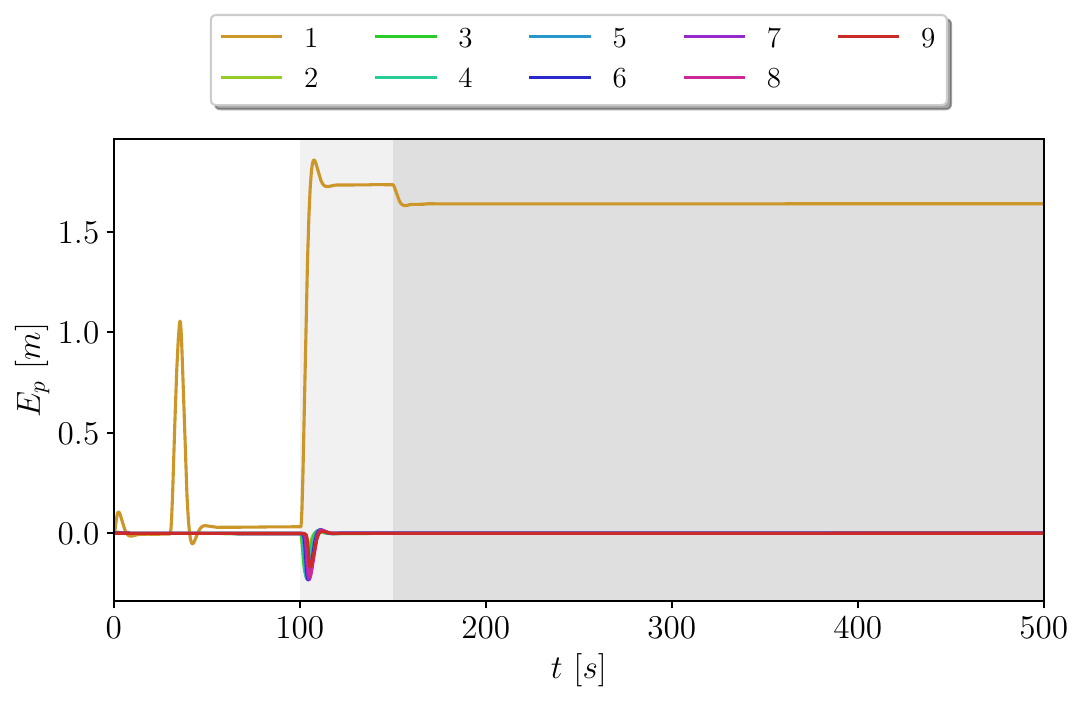} \label{subfig:BDL_nointeg}}  
	\subfloat[BDL  with $k_{s} > 0$.]{\includegraphics[width=0.243\textwidth]{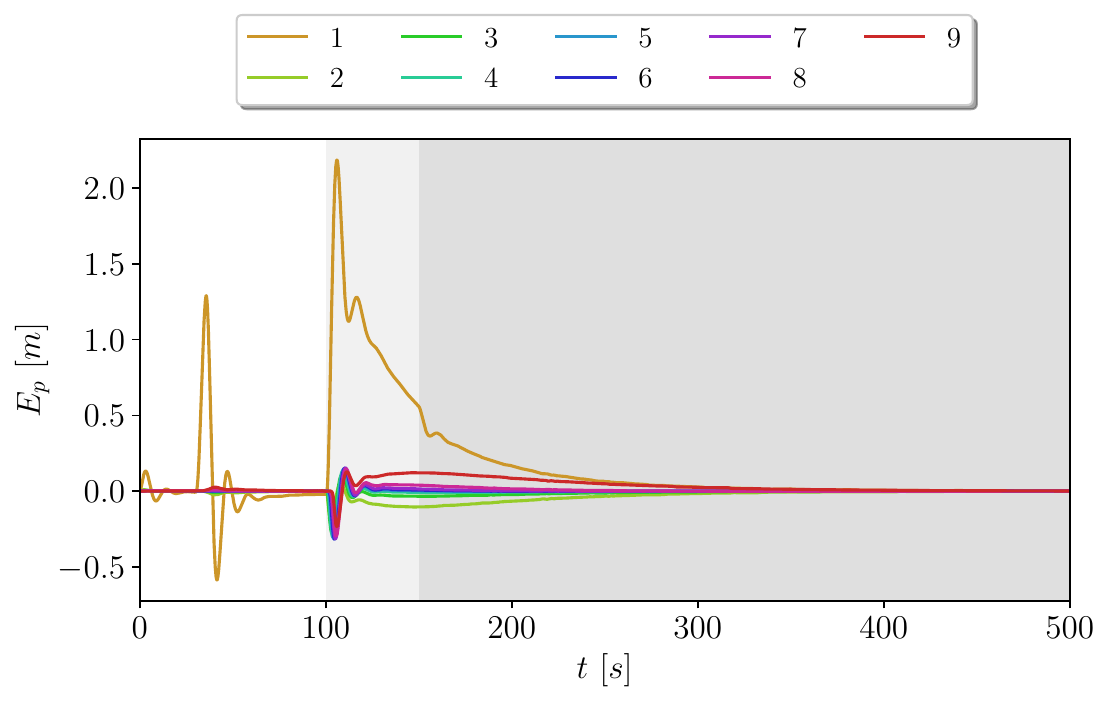} \label{subfig:BDL_integ}} \\
	\subfloat[$r$BD  with $k_{s} = 0$.]{\includegraphics[width=0.243\textwidth]{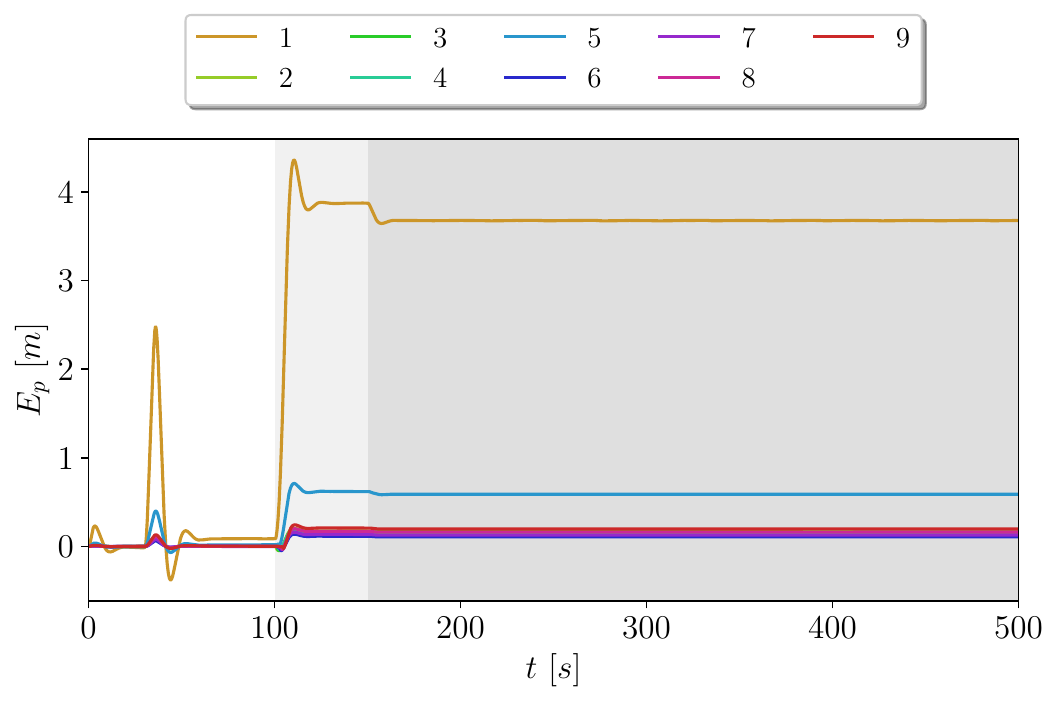} \label{subfig:rBD_nointeg}}
	\subfloat[$r$BD  with $k_{s} > 0$.]{\includegraphics[width=0.243\textwidth]{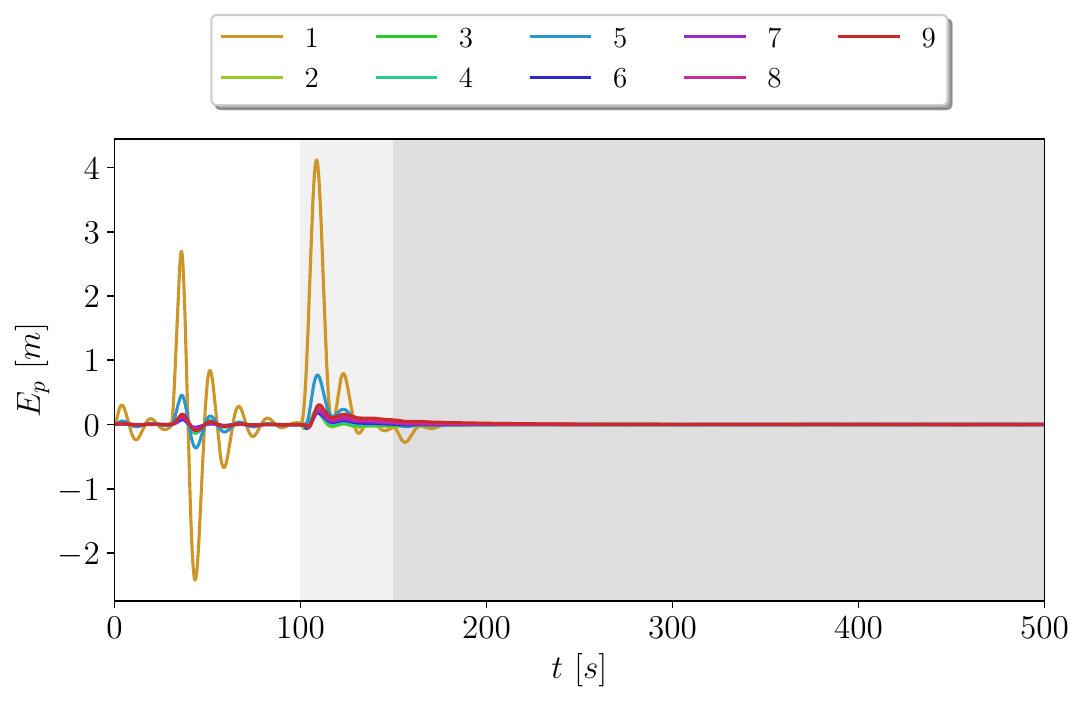} \label{subfig:rBD_integ}}
	\subfloat[$r$BDL  with $r = 4$ and $k_{s} = 0$.]{\includegraphics[width=0.243\textwidth]{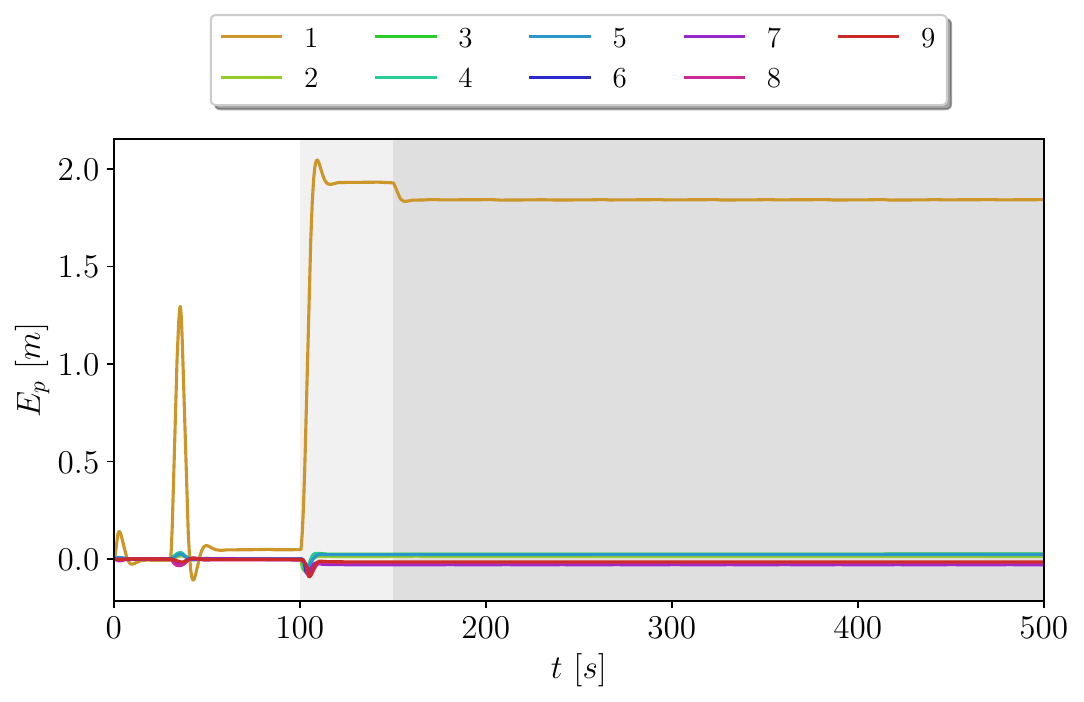} \label{subfig:rBDL_nointeg}}
	\subfloat[$r$BDL  with $r = 4$ and $k_{s} > 0$.]{\includegraphics[width=0.243\textwidth]{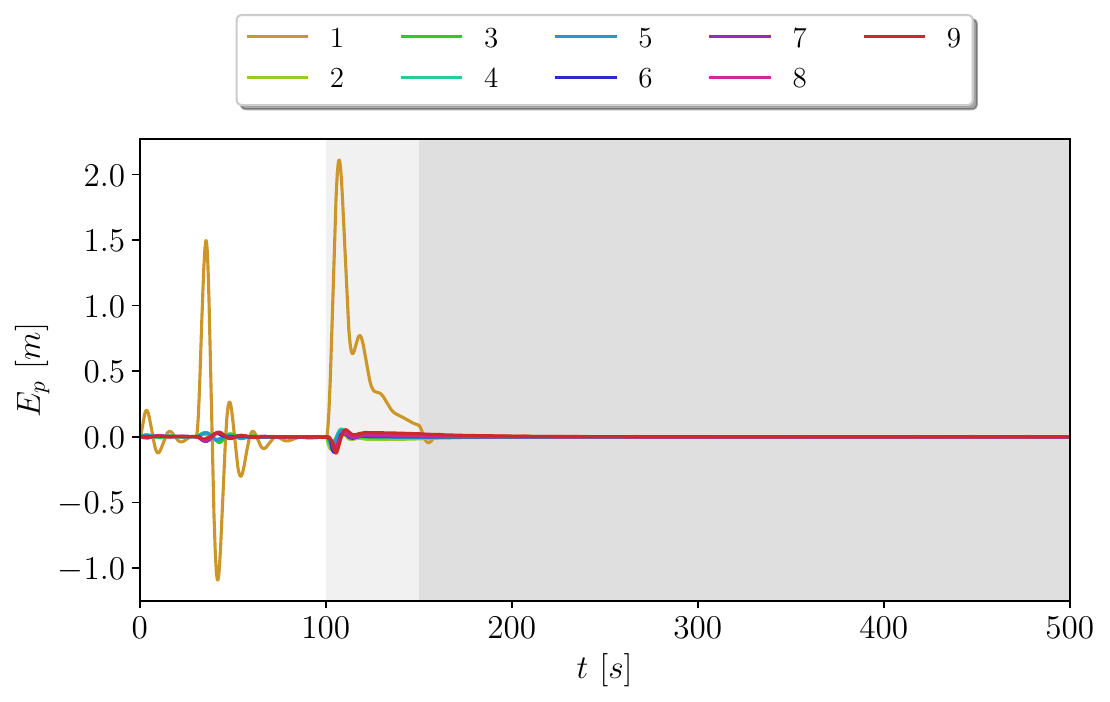} \label{subfig:rBDL_integ}}
	 \caption{Spacing error profiles for a platoon connected through different network topologies and subjected to parametric model uncertainty, leading vehicle acceleration (white background), road slope (light gray background), and wind gust (dark gray background) disturbances.}
	\label{fig:SpacingErrorTopologiesCarla}
\end{figure*}

\section{Conclusion and Future Work}\label{sec:Conclusion}
	
In this paper, we have investigated the longitudinal control of a platoon of vehicles subjected to internal and external disturbances. We focus on homogeneous platoons with a constant spacing policy under generalized communication topologies. We demonstrated that a simple change in the linear distributed control law can effectively reject constant bounded disturbances, such as those caused by modeling uncertainties, road slopes, and wind gusts, independently of the network topology. Our proposal was validated using the \emph{CARLA}, a highly realistic simulator of autonomous vehicles widely used by researchers in the field of intelligent transportation.

Aspects for further investigation include extending the modeling to heterogeneous platoons and imperfect communication channels with time delays and denial of service. We can also address more challenging problems related to limited communication range, time-varying network topologies, and the influence of mixed traffic on the platoon's stability.




\bibliography{references_hom}
\bibliographystyle{IEEEtran}

\vspace{-1cm}

\begin{IEEEbiography}[{\includegraphics[width=2.3cm,clip,keepaspectratio]{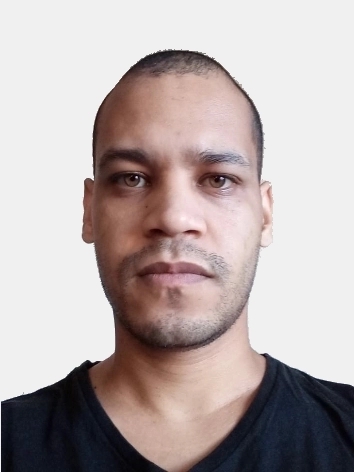}}]{Emerson Alves da Silva} 
received his B.S. in Systems Engineering in 2021 and M.S. in Electrical Engineering in 2022, both from the Universidade Federal de Minas Gerais (UFMG), He is currently a Ph.D. student of Electrical Engineering in UFMG, Brazil, in the research field of control and modeling of autonomous and connected vehicles.
\end{IEEEbiography}
\vspace{-1cm}

\begin{IEEEbiography}[{\includegraphics[width=2.3cm,clip,keepaspectratio]{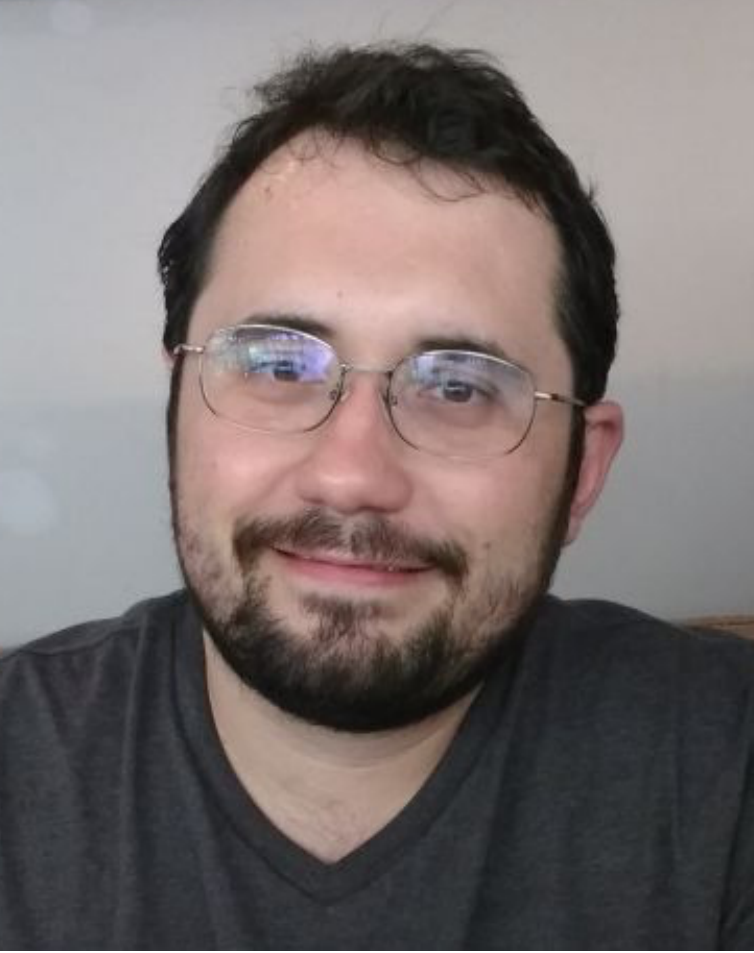}}] 
	{Leonardo Amaral Mozelli} received the B.Sc., M.Sc., and Ph.D. degrees (Electrical Engineering) from Universidade Federal de Minas Gerais (UFMG), in 2006, 2008, and 2011, respectively. He is an Associate Professor at Department of Electronic Engineering, UFMG, since 2017, teaching in the interlinked areas of signal processing, dynamic systems, and automatic control.  Research topics: multi-agent systems, adaptive signal processing, robust control theory, and technologies for sustainable development.
\end{IEEEbiography}
\vspace{-1cm}

\begin{IEEEbiography}[{\includegraphics[width=2.3cm,clip,keepaspectratio]{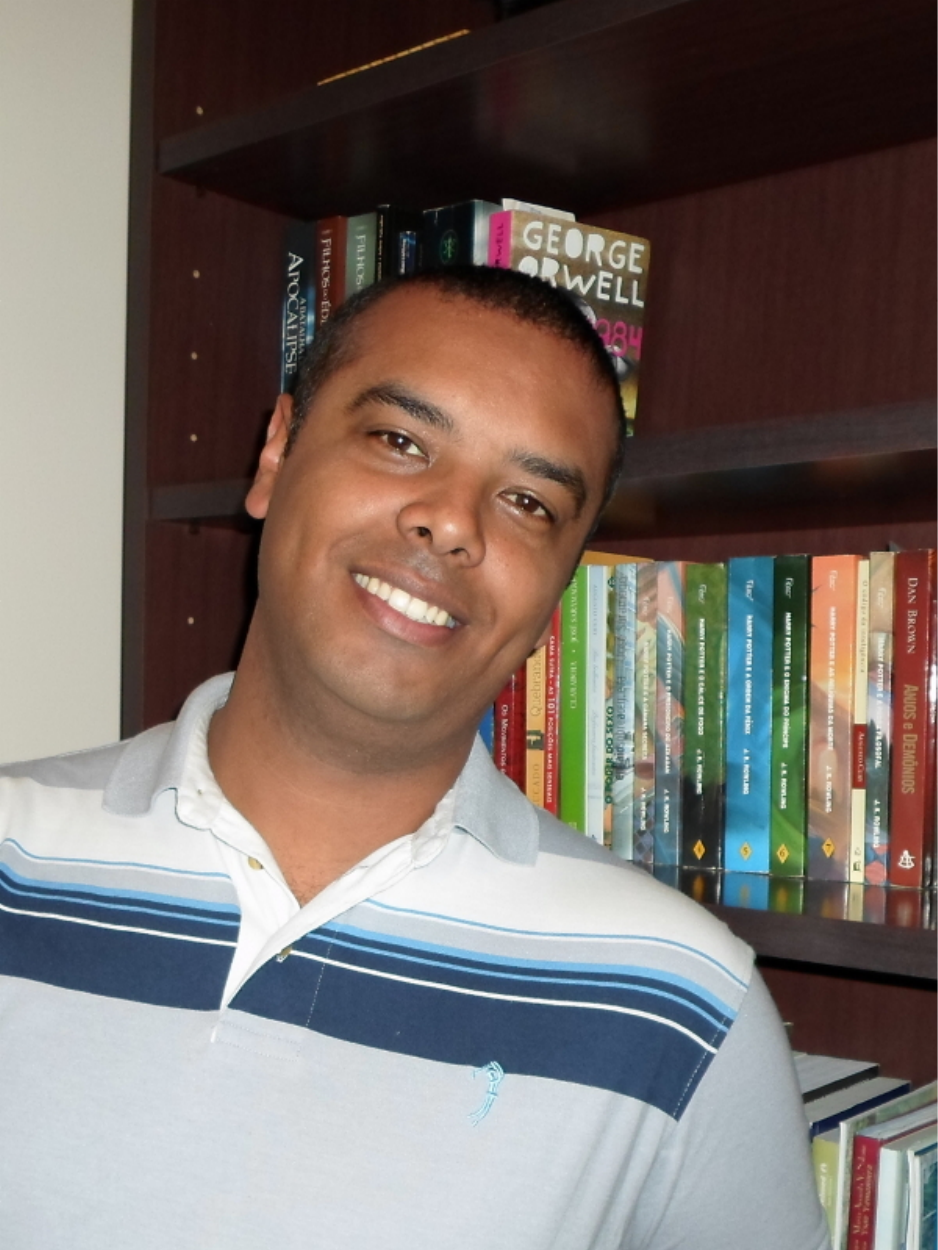}}]{Armando Alves Neto}
received the B.S.E. degree in Automation and Control Engineering from the Universidade Federal de Minas Gerais in 2006, and S.M. and Ph.D. degrees in Computer Science from UFMG in 2008 and 2012, respectively. He is an Assistant Professor in the Department of Electronic Engineering at UFMG. Research interests include real-time motion planning, multi-agent control, robust control, and collision avoidance strategies.
\end{IEEEbiography}

\vspace{-1cm}

\begin{IEEEbiography}[{\includegraphics[width=2.3cm,clip,keepaspectratio]{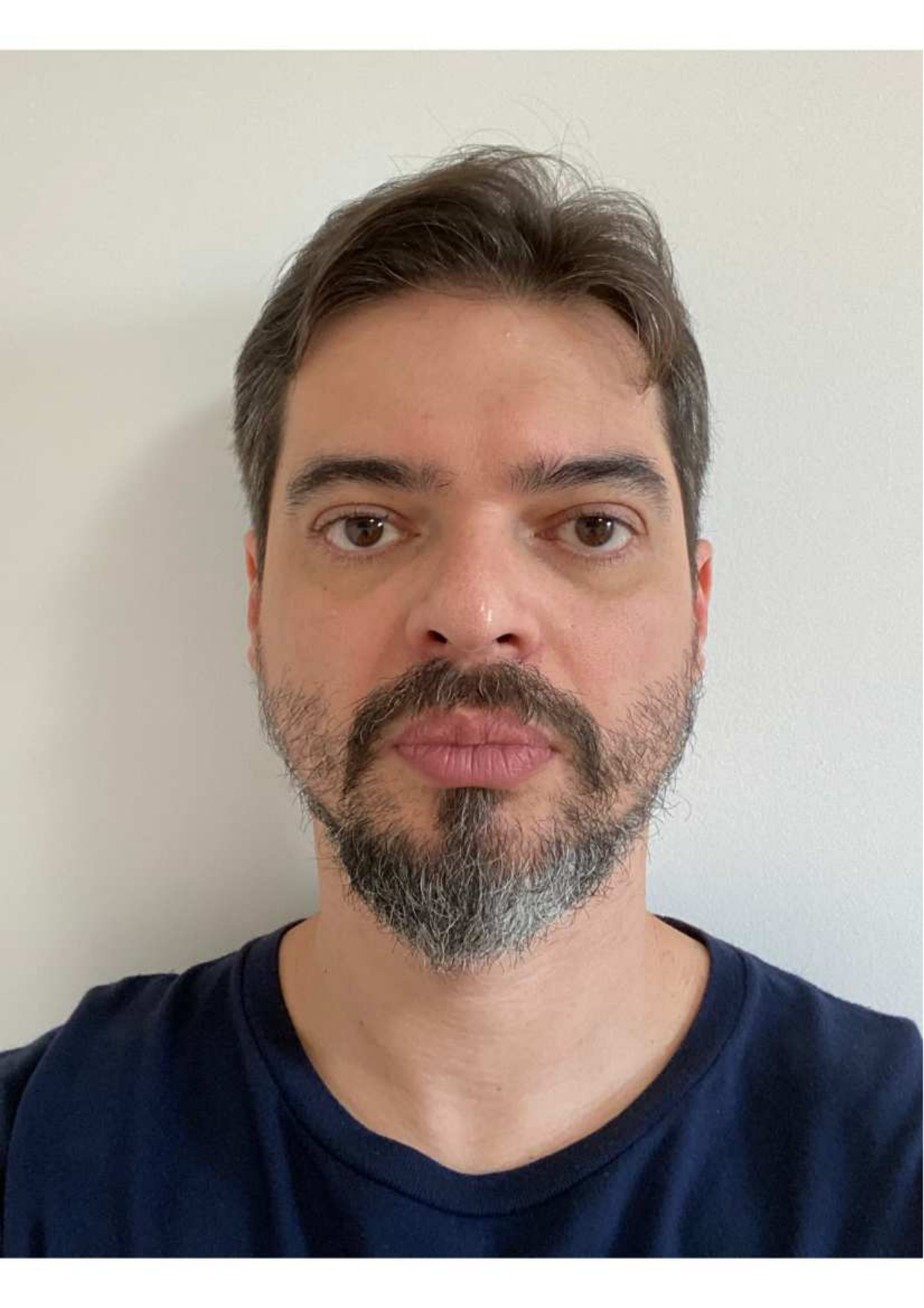}}]{Fernando de Oliveira Souza}
obtained his B.S. (2003) degree in Control and Automation Engineering from the Pontif\'icia Universidade Cat\'olica de Minas Gerais, Brazil, and his M.S. (2005) and Ph.D. (2008) degrees in Electrical Engineering from the Universidade Federal de Minas Gerais (UFMG), Brazil. He is currently an associate professor at the Department of Electronic Engineering, UFMG. His research interests include multi-agent systems, time-delay systems, and robust control.
\end{IEEEbiography}

\vfill
\end{document}